\def\amsbb{\use@mathgroup \M@U \symAMSb}
\definecolor{darkred}{RGB}{200, 0, 0}
\definecolor{darkgreen}{RGB}{0, 100, 0}
\definecolor{darkblue}{RGB}{0, 0, 200}
\newcommand{\nbox}[2][9]{\hspace{#1pt} \mbox{#2} \hspace{#1pt}}
\newtheorem{prop}{Proposition}[section]
\newtheorem{obs}{Observation}[section]
\newcommand{\icfo}{ICFO-Institut de Ci{\`e}ncies Fot{\`o}niques, The Barcelona Institute of Science and Technology, 08860 Castelldefels (Barcelona), Spain}
\DeclareMathOperator{\tr}{tr}
\def \diracspacing {0.7pt}
\newcommand{\bra}[1]{\langle #1 \hspace{\diracspacing} |} 
\newcommand{\ket}[1]{| \hspace{\diracspacing} #1 \rangle} 
\newcommand{\ketbra}[2]{| \hspace{\diracspacing} #1 \rangle \langle #2 \hspace{\diracspacing} |} 
\newcommand{\ketbraq}[1]{\ketbra{#1}{#1}} 
\newcommand{\bramatket}[3]{\langle #1 \hspace{\diracspacing} | #2 | \hspace{\diracspacing} #3 \rangle} 
\newcommand{\bramatketq}[2]{\bramatket{#1}{#2}{#1}} 
\newcommand{\tran}[0]{^\textnormal{\tiny{T}}}
\newcommand{\hc}{^{\dagger}}
\newcommand{\norm}[2][]{#1| \! #1| #2 #1| \! #1|}
\newcommand{\abs}[2][]{#1| #2 #1|}
\newcommand{\smod}[0]{\hspace{-0.15cm}\mod}
\newcommand{\cL}{\mathcal{L}}
\newcommand{\sH}{\mathscr{H}}
\newcommand{\Andef}{A^{(n)} := \sum_{a = 0}^{d - 1} \omega^{an} F_{a}}
\newcommand{\om}[2][:]{\omega #1= \exp( 2 \pi \mathbbm{i} / #2 )}
\newcommand{\Cjndef}{ C_{j}^{(n)} := \frac{ \lambda_{n} }{\sqrt{d}} \sum_{k} \omega^{njk} B_{k}^{(n)} }
\newcommand{\powerrelation}{ C_{j}^{(n)} = \big[ C_{j}^{(1)} \big]^{n} }
\newcommand{\bobobservables}{B_{k} := \omega^{ k (k + 1) } X Z^{k}}
\newcommand{\comrelation}[3]
{
B_{#1}\hc &= - \omega \{ B_{#2}, B_{#3} \}
}
\newcommand{\aliceobservables}{U_{A} A_{j} U_{A}\hc = O_{j}^{(1)} \otimes P_{1} + O_{j}^{(2)} \otimes P_{2}}
\newcommand{\bobobservablesBM}{U_{B} B_{k} U_{B}\hc = O_{k}^{(1)} \otimes Q_{1} + O_{k}^{(2)} \otimes Q_{2}}
\newcommand{\canonicalobservables}
{
\begin{align*}
O_{0}^{(1)} &= X,& O_{1}^{(1)} &= X^{2} Z,& O_{2}^{(1)} &= Z^{2},\\
O_{0}^{(2)} &= X,& O_{1}^{(2)} &= Z^{2},& O_{2}^{(2)} &= X^{2} Z
\end{align*}
}
\newcommand{\state}{U \rho_{AB} U\hc = \Phi_{A'B'} \otimes \sigma_{A''B''}}
\begin{document}
\title{Maximal nonlocality from maximal entanglement and mutually unbiased bases, and self-testing of two-qutrit quantum systems}
\author{J{\k{e}}drzej Kaniewski}
\affiliation{Center for Theoretical Physics, Polish Academy of Sciences, Al. Lotnik\'ow 32/46, 02-668 Warsaw, Poland}
\affiliation{QMATH, Department of Mathematical Sciences, University of Copenhagen, Universitetsparken 5, 2100 Copenhagen, Denmark}
\orcid{0000-0003-1133-3786}
\author{Ivan {\v{S}}upi{\'c}}
\affiliation{\icfo}
\orcid{0000-0002-0361-6631}
\author{Jordi Tura}
\affiliation{Max-Planck-Institut f{\"u}r Quantenoptik, Hans-Kopfermann-Stra{\ss}e 1, 85748 Garching, Germany}
\orcid{0000-0002-6123-1422}
\author{Flavio Baccari}
\affiliation{\icfo}
\orcid{0000-0003-3374-5968}
\author{Alexia Salavrakos}
\affiliation{\icfo}
\orcid{0000-0003-4548-4339}
%
%
\author{Remigiusz Augusiak}
\affiliation{Center for Theoretical Physics, Polish Academy of Sciences, Al. Lotnik\'ow 32/46, 02-668 Warsaw, Poland}
\orcid{0000-0003-1154-6132}
%
%

\maketitle

\begin{abstract}
Bell inequalities are an important tool in device-independent quantum information processing because their violation can serve as a certificate of relevant quantum properties. 
Probably the best known example of a Bell inequality is due to Clauser, Horne, Shimony and Holt (CHSH), which is defined in the simplest scenario involving two dichotomic measurements and whose all key properties are well understood. There have been many attempts to generalise the CHSH Bell inequality to higher-dimensional quantum systems, however, for most of them the maximal quantum violation---the key quantity for most device-independent applications---remains unknown. On the other hand, the constructions for which the maximal quantum violation can be computed, do not preserve the natural property of the CHSH inequality, namely, that the maximal quantum violation is achieved by the maximally entangled state and measurements corresponding to mutually unbiased bases.
In this work we propose a novel family of Bell inequalities which exhibit precisely these properties, and whose maximal quantum violation can be computed analytically. In the simplest scenario it recovers the CHSH Bell inequality.
These inequalities involve $d$ measurements settings, each having $d$ outcomes for an arbitrary prime number $d\geq 3$. 
We then show that in the three-outcome case our Bell inequality can be used to self-test the maximally entangled state of two-qutrits and three mutually unbiased bases at each site. Yet, we demonstrate that in the case of more outcomes, their maximal violation does not allow for self-testing in the standard sense, which motivates the definition of a new weak form of self-testing. The ability to certify high-dimensional MUBs makes these inequalities attractive from the device-independent cryptography point of view.
\end{abstract}
%
%
\section{Introduction}
Nonlocality of quantum mechanics, in the sense first described by Einstein, Podolsky and Rosen~\cite{einstein35a} and later formalised by Bell~\cite{bell64a}, is arguably one of its most counterintuitive features.
%
%
While the original motivation for studying Bell inequalities was to rule out a classical (local-realistic) description of the system under study, we now understand that Bell nonlocality can also be used in a constructive manner. If we assume that our system is governed by quantum mechanics, we can use Bell violations to certify specific quantum properties. One can, for instance, certify the dimension~\cite{brunner08a}, the amount of entanglement present in the state~\cite{moroder13a}, or the degree of incompatibility of the measurements~\cite{cavalcanti16a, chen16a}. Bell violations are also used to certify randomness produced in the experiment, which is often applied to \emph{device-independent cryptography}, e.g.~randomness generation/expansion~\cite{colbeck06a, pironio10a, colbeck11a, vazirani12a, miller16b, bouda14a}, quantum key distribution~\cite{barrett05b, acin06a, acin07a, reichardt13a, vazirani14a, miller16b, arnonfriedman18a} or multi-party cryptography~\cite{silman11a, kaniewski16a, ribeiro18a, ribeiro16a, ribeiro18b}. In some cases the observed nonlocal correlations give us a full description of the system under study (up to well-understood equivalences). This constitutes the most complete variant of device-independent certification and goes under the name of \emph{self-testing}~\cite{mayers98a, mayers04a}. Most self-testing schemes allow us to certify states which are locally qubits~\cite{bardyn09a, mckague14a, mckague12a, yang13a, bamps15a, wang16a, supic16a}. These results were combined to give certification schemes for higher-dimensional systems~\cite{yang13a, coladangelo17a, supic18a}. However, they are still based on the violation of many two-outcome Bell inequalities; in particular, the observables they use are constructed out of single-qubit measurements. It remains a highly nontrivial and interesting problem to propose certification schemes for quantum states of higher local dimension based on a single $d$-outcome Bell inequality and exploiting genuine $d$-outcome measurements such as higher-dimensional mutually unbiased bases. The main aim of our work is to fill this gap.


The certification aspect adds a new layer of complexity to nonlocality. Given a Bell inequality it is no longer sufficient to find the local and quantum values, but one should go one step further to investigate whether the observed violation can be used for certification of quantum resources. Numerous Bell inequalities have been proposed in the last 25 years (see Ref.~\cite{brunner14a} for a comprehensive review), but their certification properties are in most cases poorly understood.

The simplest and most-studied Bell inequality is due to Clauser, Horne, Shimony and Holt (CHSH)~\cite{clauser69a}. In the CHSH scenario there are two devices which have two settings and two outcomes each. It is well known that this inequality can be maximally violated by performing maximally incompatible qubit measurements on the maximally entangled state of two qubits. In fact, this is essentially the only manner of achieving the maximal violation~\cite{tsirelson87a, summers87a, popescu92a, tsirelson93a}. Several generalisations of the CHSH inequality to Bell scenarios involving $d$-outcome measurements have been proposed. However,
for most of them the maximal quantum value is not known \cite{collins02a, buhrman05a, barrett06a, son06a}. For those for which the maximal quantum value can be computed, the maximal quantum violation is achieved by the maximally entangled state, but the optimal measurements do not correspond to mutually unbiased bases \cite{devicente15a, salavrakos17a, coladangelo18a}.



In this work we introduce a generalisation of the CHSH Bell inequality which fills this gap.
%
%
To this aim we consider the Bell functional due to Buhrman and Massar (BM)~\cite{buhrman05a} in which the settings and outcomes instead of being bits come from the set $\{0, 1, \ldots, d - 1 \}$ for some integer $d$ and the winning condition is interpreted modulo $d$. While this functional seems to be quite a natural generalisation of the CHSH one, it turns out to be surprisingly hard to analyse. In particular, the quantum value is only known for $d = 3$ (and the proof is numerical). Here we define a modification of the BM inequality for $d$ being an odd prime and show that it has several desirable features. Most importantly, we can compute the quantum value by first exhibiting a sum-of-squares (SOS) decomposition of the Bell operator and then giving an explicit quantum realisation which saturates this bound. This quantum realisation uses the maximally entangled state of local dimension $d$ and rank-1 projective measurements which are pairwise mutually unbiased. On the other hand, finding the classical value of our Bell expressions turns out to be a difficult problem and we compute it only for $d=3,5,7$. Nevertheless, we conjecture that the classical and quantum values differ for any prime $d$.

Importantly, the SOS decomposition allows us to derive explicit algebraic relations that the optimal observables must satisfy. For $d = 3$ we are able to completely solve these relations, i.e.~for this inequality we obtain a complete self-testing statement. To the best of our knowledge this is the first analytical self-testing statement, which does not rely on self-testing results for two-dimensional systems (except for Ref.~\cite{coladangelo17c} which relates the self-testing problem to representations of a certain group). Note that a partial self-testing statement for the maximally entangled state  of two qutrits has recently been proven numerically for a different Bell inequality \cite{salavrakos17a}. For $d = 5$ and $d = 7$, on the other hand, the situation becomes more complicated: we show that the maximal violation can be achieved by quantum realisations which are not equivalent according to the standard definition of self-testing.

The paper is organised as follows. In Section~\ref{sec:preliminaries} we establish notation, whereas in Section~\ref{sec:modified-BM} we explicitly state the modified BM inequality and compute its quantum value. In Section~\ref{sec:quantum-realisations} we provide a partial characterisation of the quantum realisations saturating the quantum bound and derive a self-testing statement for $d=3$. We summarise our findings and discuss some resulting open questions in Section~\ref{sec:conclusions}.
\section{Preliminaries}
\label{sec:preliminaries}

This section sets up the scenario and introduces the relevant
notation and terminology from the area of Bell nonlocality.

\subsection{Measurements and observables}
\label{sec:measurements-observables}
Throughout this work we assume all the systems to be finite-dimensional. A measurement with $d$ outcomes is a collection of positive semidefinite operators $\{ F_{a} \}_{a = 0}^{d - 1}$ satisfying $\sum_{a = 0}^{d - 1} F_{a} = \mathbb{1}$. Given a measurement and an integer $n \in \amsbb{Z}$ we define
\begin{equation*}
\Andef,
\end{equation*}
where $\om{d}$. Clearly, the operator corresponding to $n = 0$ is fixed by the normalisation condition: $A^{(0)} = \mathbb{1}$. Since this is a discrete Fourier transform, the inverse transformation is given by
\begin{equation*}
F_{a} = \frac{1}{d} \sum_{n = 0}^{d - 1} \omega^{-an} A^{(n)}.
\end{equation*}
Therefore, we may think of the operators $ A^{(1)}, \ldots, A^{(d-1)} $ as an alternative description of the measurement. This representation 
turns out to be convenient for our purposes.

Since all computations in this work are performed at the level of operators $A^{(n)}$, let us state some of their properties (see Appendix~\ref{app:measurements-observables} for proofs). For arbitrary $n$ we have
\begin{equation*}
\big[ A^{(n)} \big]\hc = A^{(-n)} \nbox{and} \big[ A^{(n)} \big]\hc A^{(n)} \leq \mathbb{1}.
\end{equation*}
Moreover, it is clear that $A^{(n)} = A^{(n + d)} = A^{(n - d)}$, i.e.~there are at most $d-1$ distinct operators (because $A^{(0)}=\mathbb{1}$).
This description becomes particularly simple when the original measurement is projective, i.e.~the measurement operators satisfy $F_{a} F_{b} = \delta_{ab} F_{a}$ with $\delta_{ab}$ being the Kronecker delta. Then, the entire measurement can be encoded into a single operator: it is easy to verify that $A^{(n)} = A^{n}$ for $A := A^{(1)}$. In such a case we will refer to the unitary operator $A$ as the \emph{observable} and one can check that its spectrum is contained in $\{1, \omega, \omega^{2}, \ldots, \omega^{d - 1} \}$. To ensure that an unknown measurement is projective, it suffices to check that the operator $A^{(1)}$ is unitary, i.e.~$\big[ A^{(1)} \big]\hc A^{(1)} = \mathbb{1}$.
\subsection{Bell scenario and Bell inequalities}
\label{sec:bell-operator}
In this work we consider a bipartite Bell scenario
in which two parties, traditionally named Alice and Bob, 
share some bipartite physical system, and each of them performs a number of 
measurements on their share of this system. 
We assume that both parties perform $d$ measurements
and each measurement has exactly $d$ outcomes. 
The measurement choices of Alice and Bob are labelled by $j$ and $k$, whereas the outcomes by $a$ and $b$, respectively, and we use the convention that $a, b, j, k \in \{0, 1, \ldots, d - 1\}$.
%
Denoting the probability of observing outcomes $a$ and $b$ given settings $j$ and $k$ by $P(ab | jk)$, the above experiment, termed also Bell experiment, is described by a set of probability 
distributions $\{P(ab|jk)\}$.

It is natural to assume that any correlations observed in such a Bell experiment satisfy the no-signalling principle, meaning that the outcomes of one of the parties cannot depend on the measurement choice made by the other party. Mathematically, this is expressed as a set linear constraints of the form
\begin{equation}
\sum_{a}P(ab|jk)=\sum_{a}P(ab|j'k)
\end{equation}
for all $b,k$ and $j\neq j'$, and
\begin{equation}
\sum_{b}P(ab|jk)=\sum_{a}P(ab|jk')
\end{equation}
for all $a,j$ and $k\neq k'$. By the very definition, correlations obeying the no-signalling
principle form a convex polytope that for further purposes we denote $\mathcal{N}_d$. 

Imagine now that Alice and Bob share a quantum system represented by a bipartite density matrix $\rho_{AB}$ and perform quantum measurements on it. Then, the conditional probabilities are given by the following well-known formula	
\begin{equation}\label{BornForm}
P(ab | jk) = \tr \big[ ( F_{a}^{j} \otimes G_{b}^{k} ) \rho_{AB} \big],
\end{equation}
where $\{F_{a}^{j}\}, \{G_{b}^{k}\}$ represent the measurements of Alice and Bob, respectively.
Let us notice that all such quantum correlations, that is, correlations obtained from quantum states (if we do not constrain their local dimension) form a convex set, denoted $\mathcal{Q}_{d}$, whose structure is in general unknown and difficult to characterise~\cite{slofstra17a}.

The last set of correlations we need to introduce here is that of correlations 
admiting the local-realistic description, that is, those for which $P(ab|jk)$ can be represented as
\begin{equation}
P(ab|jk)=\sum_{\lambda}P(\lambda)D_A(a|x,\lambda)D_B(b|y,\lambda),
\end{equation}
where $\lambda$ are the hidden variables, while $D_A(a|x,\lambda)$ (and similarly 
$D_B(b|y,\lambda)$) is a deterministic function that for a given $x$ and 
$\lambda$ returns a fixed outcome with probability one. Correlations admitting such description
are usually referred to as local or classical, and, similarly to the no-signalling
correlations, they form a convex set that is a polytope, denoted $\mathcal{P}_d$. 

For a given scenario it holds that $\mathcal{P}_d \subsetneq \mathcal{Q}_d\subsetneq\mathcal{N}_d$. In particular, a natural way to show that $\mathcal{P}_d \subsetneq \mathcal{Q}_d$ is to use Bell inequalities. To define them explicitly let us consider a Bell functional
\begin{equation}\label{BellIneq}
\beta := \sum_{abjk} c_{abjk} P(ab | jk),
\end{equation}
where $c_{abjk}$ are some real coefficients. Computing then the maximal value of $\beta$ over the local set $\mathcal{P}_d$, that is, $\beta_C:=\max_{\mathcal{P}_d}\beta$, one arrives at a Bell inequality
\begin{equation}
\sum_{abjk} c_{abjk} P(ab | jk)\leq \beta_C
\end{equation}
whose violation indicates nonlocality.
Let us also denote by $\beta_Q$ and $\beta_{NS}$ the maximal values of
$\beta$ over the quantum and no-signalling sets, that is, 
\begin{equation}
\beta_Q:=\sup_{\mathcal{Q}_d}\beta,\qquad \beta_{NS}:=\max_{\mathcal{N}_d}\beta.
\end{equation}

\subsection{Bell operators and sum of squares decompositions}

Let us now consider a Bell experiment performed on a quantum state $\rho_{AB}$ with measurements $\{F_a^j\}$ and $\{G_b^k\}$. Due to Born's formula (\ref{BornForm}) the value of a Bell functional $\beta$ corresponding to a Bell inequality (\ref{BellIneq}) can be computed as $\beta = \tr (W \rho_{AB})$, where
\begin{equation}\label{BellOp2}
W := \sum_{abjk} c_{abjk} F_{a}^{j} \otimes G_{b}^{k}
\end{equation}
is the Bell operator constructed from the measurements $\{F_a^j\}$ and $\{G_b^k\}$. 

It will be highly beneficial for us to reformulate the Bell operator in terms of
quantum observables instead of positive semi-definite measurements operators, in particular it will facilitate devising a sum of squares decomposition for our Bell inequality. To be more precise, 
let us denote the Fourier transforms of the measurements $\{ F_{a}^{j} \}$ and $\{ G_{b}^{k} \}$ by $\{A_{j}^{(n)}\}$ and $\{B_{k}^{(n)}\}$ (see Sec. \ref{sec:measurements-observables}), respectively. This allows us to rewrite
the Bell operator (\ref{BellOp2}) in the following form
\begin{equation*}
W = \frac{1}{d^{2}} \sum_{abjk} \sum_{n_{1} n_{2}} c_{abjk} \omega^{ -a n_{1} - b n_{2} } A_{j}^{(n_{1})} \otimes B_{k}^{(n_{2})},
\end{equation*}
where as before the summations go over $\{0, 1, \ldots, d - 1\}$. The coefficients
\begin{equation*}
u_{n_{1}, n_{2}, j, k} := \frac{1}{d^{2}} \sum_{ab} c_{abjk} \omega^{ -a n_{1} - b n_{2} }
\end{equation*}
correspond to the 2-dimensional discrete Fourier transform of the Bell coefficients $(c_{abjk})$. Since $(c_{abjk})$ are real, the Fourier coefficients satisfy
\begin{equation}
\label{eq:u-condition}
u_{n_{1}, n_{2}, j, k} = u_{ d - n_{1}, d - n_{2}, j, k}^{*}.
\end{equation}
We will later use the fact that this condition is also sufficient for the Bell coefficients to be real.

Analytic bounds on the quantum value of a Bell operator can be obtained by constructing an SOS decomposition. More specifically, suppose that we can show that for all valid measurements of Alice and Bob we have
\begin{equation*}
W \leq c \, \mathbb{1} - \sum_{j = 1}^{t} L_{j}\hc L_{j},
\end{equation*}
where $c \in \amsbb{R}$ is a constant and $( L_{j} )_{j = 1}^{t}$ is a collection of bipartite operators constructed from the measurement operators of Alice and Bob. This immediately implies that for any state $\rho_{AB}$ we have $\tr ( W \rho_{AB} ) \leq c$. Moreover, if the SOS decomposition is tight, i.e.~$\beta_{Q} = c$, it yields explicit restrictions on the optimal realisation: any quantum realisation that achieves $\tr (W \rho_{AB}) = \beta_{Q}$ must satisfy
\begin{equation}
\label{eq:Lj-trace}
\tr ( L_{j}\hc L_{j} \rho_{AB}) = 0
\end{equation}
for all $j \in \{1, 2, \ldots, t\}$. Note that $\tr ( L_{j}\hc L_{j} \rho_{AB}) = \norm[\big]{ L_{j} \rho_{AB}^{1/2} }_{F}^{2}$, where $\norm{\cdot}_{F}$ is the Frobenius norm. Therefore, Eq.~\eqref{eq:Lj-trace} is equivalent to $L_{j} \rho_{AB}^{1/2} = 0$ and immediately implies $L_{j} \rho_{AB} = 0$.
\subsection{Two approaches to self-testing}
\label{sec:two-approaches}
Self-testing constitutes the most complete form of device-independent certification in which the quantum realisation is determined up to local unitaries and extra degrees of freedom. These two ambiguities cannot be resolved in a device-independent scenario and it is implicitly understood that in the task of self-testing we identify the quantum realisation up to these two equivalences. For brevity we will refer to them as the \emph{standard equivalences}. While for many scenarios these equivalences are sufficient, in some cases there is an extra equivalence resulting from the fact that the quantum realisation can be transposed to obtain an inequivalent realisation~\cite{mckague11a, kaniewski17a, andersson17a}. Since the transpose is as well-understood and unavoidable as the other two equivalences, we believe it is justified to also refer to this phenomenon as self-testing.

If we want to base a self-testing statement on the observed Bell violation, there are two approaches we can choose from. The starting point of a self-testing argument are algebraic relations satisfied by the optimal observables. Typically, in order to derive such relations from the observed Bell value, we examine the SOS decomposition (although in some cases one can also look at the square of the Bell operator~\cite{kaniewski17a}). In the first approach we use the knowledge of the ideal observables to propose a swap unitary. This unitary acts jointly on one part of the unknown state and a fresh register and attempts to swap out the relevant part of the state into the new register. Such a unitary is applied on both sides and we then use the algebraic relations to show that at the end the new registers hold the desired ideal state.

In this work, however, we use a different method proposed originally by Popescu and Rohrlich~\cite{popescu92a}. We use the previously deduced algebraic relations to derive relations which contain observables of a single party only. Since properties of observables can only be determined on the support of the reduced states $\rho_{A}$ and $\rho_{B}$, it is convenient to assume that the reduced states are full-rank (this assumption does not affect the conclusions, but it significantly simplifies the mathematical description of the problem). These single-party algebraic relations allow us to deduce the exact form of the local observables for each party up to local unitaries and extra degrees of freedom. Once the local observables have been characterised, we can simply construct the Bell operator and diagonalise it. The state shared by the players is now determined by the eigenspace corresponding to the largest eigenvalue.
\section{The modified Buhrman-Massar functional}
\label{sec:modified-BM}

We are now ready to present our main result. 
We begin with a new generalisation of the CHSH Bell inequality
to $d$-outcome Bell scenarios. 

In the CHSH scenario the settings $j, k$ and the outcomes $a, b$ are bits and the Bell functional is given by
\begin{equation*}
c_{abjk} :=
\begin{cases}
1/4 &\nbox{if} a \oplus b \oplus jk = 0,\\
0 &\nbox{otherwise.}
\end{cases}
\end{equation*}
It is well known that
$\beta_{L} = 3/4$, $\beta_{Q} = 1/2 + 1/(2 \sqrt{2})$
and $\beta_{NS} = 1$. Buhrman and Massar proposed a natural generalisation of the CHSH functional by extending the input and output alphabet to $\{0, 1, \ldots, d - 1\}$ and replacing the XOR operation by addition modulo $d$~\cite{buhrman05a}. The BM functional is defined as
\begin{equation*}
c_{abjk} :=
\begin{cases}
1/d^{2} &\nbox{if} a + b + jk \equiv 0 \smod d,\\
0 &\nbox{otherwise.}
\end{cases}
\end{equation*}
The no-signalling value of this functional equals $\beta_{NS} = 1$ for all $d$~\cite{buhrman05a} and is achieved by a straightforward generalisation of the Popescu-Rohrlich box~\cite{popescu94a}. The quantum value is in general not known and the only analytic bound states that whenever $d$ is prime we have~\cite{bavarian15a}
\begin{equation}
\label{eq:analytic-upper-bound}
\beta_{Q} \leq \frac{1}{d} + \frac{ d - 1 }{ d \sqrt{d} }.
\end{equation}
For small values of $d$ upper and lower bounds have been computed numerically by Liang et al.~\cite{liang09a}. The only case in which the two bounds coincide corresponds to $d = 3$ and the resulting value is in excellent agreement with the analytic expression
\begin{equation*}
\beta_{Q} = \frac{1}{3} + \frac{ 2 \cos( \pi / 18 ) }{ 3 \sqrt{3} }
\end{equation*}
given in Ref.~\cite{ji08a}. The local value has been explicitly computed for prime $d$ up to $d = 13$~\cite{liang09a}, but no analytic formula is known.
%
%
%
%

Although the BM functional is clearly a natural generalisation of the CHSH functional, its quantum value seems hard to determine. To avoid this problem, we propose a modification of the BM functional for which the quantum value can be computed analytically. Writing the BM operator in terms of operators $A_{j}^{(n)}$ and $B_{k}^{(n)}$ yields
\begin{equation*}
\frac{1}{d^{3}} \sum_{n = 0}^{d - 1} \sum_{jk} \omega^{njk} A_{j}^{(n)} \otimes B_{k}^{(n)}.
\end{equation*}
%
We consider a generalisation of this Bell operator given by
\begin{equation}
\label{eq:Wd-definition}
W_{d} := \frac{1}{d^{3}} \sum_{n = 0}^{d - 1} \lambda_{n} \sum_{jk} \omega^{njk} A_{j}^{(n)} \otimes B_{k}^{(n)},
\end{equation}
where $\lambda_{n}$ are complex numbers of unit modulus. To ensure that condition~\eqref{eq:u-condition} is satisfied, we must choose $\lambda_{0}$ to be real, i.e.~$\lambda_{0} = 1$, and $\lambda_{n} = \lambda_{d - n}^{*}$ for $n \in \{1, 2, \ldots, d - 1\}$.

In the remainder of this section we show that for every prime dimension $d \geq 3$ there exists a valid choice of phases $\lambda_{n}$ for which the quantum value can be computed analytically. In the first step we show that regardless of the choice of $\lambda_{n}$ we have
\begin{equation}
\label{eq:Wd-upper-bound}
W_{d} \leq \bigg( \frac{1}{d} + \frac{ d - 1 }{d \sqrt{d}} \bigg) \, \mathbb{1},
\end{equation}
which means that the upper bound given in Eq.~\eqref{eq:analytic-upper-bound} holds for arbitrary phase factors. In the second step we specify the phases and give a quantum realisation which saturates this bound.

To prove the generic upper bound given in Eq.~\eqref{eq:Wd-upper-bound} we construct a SOS decomposition. For an integer $n$ satisfying $\abs{n} \leq d - 1$ define
\begin{equation}
\label{eq:Cjn-definition}
\Cjndef,
\end{equation}
where $\lambda_{-n} := \lambda_{d - n}$. It is easy to check that $\big[ C_{j}^{(n)} \big]\hc = C_{j}^{(-n)}$, which allows us to write the Bell operator as
\begin{equation*}
W_{d} = \frac{ 1 }{d^{2} \sqrt{d}} \sum_{n = 0}^{d - 1} \sum_{j} A_{j}^{(n)} \otimes C_{j}^{(n)}.
\end{equation*}
The term corresponding to $n = 0$ is proportional to identity, whereas terms $n$ and $d - n$ are conjugate to each other. Therefore, it is convenient to write the Bell operator as
\begin{equation}
\label{eq:Wd-sum-Tn}
W_{d} = \frac{ \mathbb{1} }{d} + \frac{ 1 }{d^{2} \sqrt{d}} \sum_{n = 1}^{(d - 1)/2} T_{n}
\end{equation}
for
\begin{equation*}
T_{n} := \sum_{j} A_{j}^{(n)} \otimes C_{j}^{(n)} + \sum_{j} A_{j}^{(-n)} \otimes C_{j}^{(-n)}.
\end{equation*}
(Note that since $d$ is odd, $(d-1)/2$ is an integer.)
This expression can be rewritten as 
\begin{equation}
\begin{aligned}
\label{eq:Tn-SOS}
T_{n} = &\sum_{j} A_{j}^{(n)} A_{j}^{(-n)} \otimes \mathbb{1} + \mathbb{1} \otimes \sum_{k} B_{k}^{(-n)} B_{k}^{(n)}\\
&- \sum_{j} \big[ L_{j}^{(n)} \big] \hc L_{j}^{(n)}
\end{aligned}
\end{equation}
for $L_{j}^{(n)} := A_{j}^{(-n)} \otimes \mathbb{1} - \mathbb{1} \otimes C_{j}^{(n)}$. Since $A_{j}^{(n)} A_{j}^{(-n)} \leq \mathbb{1}$ and $B_{k}^{(-n)} B_{k}^{(n)} \leq \mathbb{1}$, we conclude that
\begin{equation}
\label{eq:Tn-upper-bound}
T_{n} \leq  2 d \, \mathbb{1},
\end{equation}
which gives the desired operator bound.

Before proceeding to the second step, let us make two comments about this SOS decomposition. First note that so far we have only used the fact that $d$ is odd. However, the decomposition easily generalises to the case of even $d$ (the only difference being that the operator $T_{n}$ corresponding to $n = d/2$ is a single sum of Hermitian operators). This shows that the upper bound given in Eq.~\eqref{eq:analytic-upper-bound} holds for all dimensions $d$ (not necessarily prime). Moreover, note that the SOS decomposition can be straightforwardly generalised to the case where the coefficients $\lambda_{n}$ are arbitrary complex numbers satisfying $\lambda_{n}=\lambda_{d-n}^{*}$. 

%
To show that the upper bound given in Eq.~\eqref{eq:Wd-upper-bound} can be saturated, we specify the phases and give an explicit quantum realisation. For an odd prime $d$ the phases are chosen as (see Appendix \ref{app:fazy} for details)
\begin{equation}\label{eq:phases1}
\lambda_{n} := \Big[ \varepsilon_{d} \Big( \frac{n}{d} \Big) \Big]^{-1} \omega^{- g(n, d)/48}
\end{equation}
where
\begin{equation}\label{eq:epsilon}
\varepsilon_{d} :=
\begin{cases}
1 &\nbox{if} d \equiv 1 \smod 4,\\
i &\nbox{if} d \equiv 3 \smod 4,
\end{cases}
\end{equation}
$\big( \frac{n}{d} \big)$ is the Legendre symbol\footnote{Recall that the Legendre symbol $\big( \frac{n}{d} \big)$ equals $+1$ if $n$ is a quadratic residue modulo $d$ and $-1$ otherwise.} and
\begin{widetext}
\begin{equation}\label{eq:phases12}
g(n, d) :=
\begin{cases}
n \big[ n^{2} - d ( d + 6 ) + 3 \big] &\nbox{if} n \equiv 0 \smod 2 \nbox{and} (n + d + 1)/2 \equiv 0 \smod 2,\\
n \big[ n^{2} - d ( d - 6 ) + 3 \big] &\nbox{if} n \equiv 0 \smod 2 \nbox{and} (n + d + 1)/2 \equiv 1 \smod 2,\\
n ( n^{2} + 3 ) + 2 d^{2} ( -5n + 3 ) &\nbox{if} n \equiv 1 \smod 4,\\
n ( n^{2} + 3 ) + 2 d^{2} ( n + 3 ) &\nbox{if} n \equiv 3 \smod 4.
\end{cases}
\end{equation}
\end{widetext}
%
The optimal quantum realisation is inspired by that of Ji et al.~\cite{ji08a}: Alice and Bob share a maximally entangled state of local dimension $d$, 
\begin{equation}\label{eq:maxent}
\ket{\Phi}_{AB} := \frac{1}{\sqrt{d}} \sum_{j = 0}^{d - 1} \ket{j}_{A} \ket{j}_{B},
\end{equation}
and perform rank-1 projective measurements which are (pairwise) mutually unbiased. Since the measurements are projective, they are fully determined by a single unitary operator: the observable. These are conveniently expressed in terms of the Heisenberg-\!Weyl operators:
\begin{equation*}
X := \sum_{j} \ketbra{ j + 1 }{j} \nbox{and} Z := \sum_{j} \omega^{j} \ketbraq{j},
\end{equation*}
where the summation goes over $j \in \{0, 1, \ldots, d - 1\}$ and $\ket{d} \equiv \ket{0}$. The observable corresponding to the $k$-th measurement of Bob is given by
\begin{equation}
\label{eq:Bk-definition}
%
%
\bobobservables.
\end{equation}
It is straightforward to check that these are valid observables (they are clearly unitary, while the correctness of the spectrum can be verified by showing that $B_{k}^{d} = \mathbb{1}$), and therefore $B_k^{(n)}=B_k^{n}$. To see that these observables correspond to mutually unbiased bases note that if we disregard the multiplicative factor $\omega^{k(k+1)}$, which corresponds to a cyclic shift of the outcomes, we obtain one of the standard constructions of a complete set of mutually unbiased bases in prime dimension~\cite{bandyopadhyay02a}.

Before defining the observables of Alice, it is convenient to explore some properties of the observables of Bob. If we compute the corresponding operator $C_{j}^{(1)}$ (as defined in Eq.~\eqref{eq:Cjn-definition}), we find that $C_{j}^{(1)}$ itself is a valid observable (i.e.~it is unitary and has the correct spectrum). Moreover, the higher-order operators satisfy
\begin{equation}
\label{eq:power-relation}
\powerrelation
\end{equation}
for $n \in \{2, 3, \ldots, d - 1\}$ (see Appendix~\ref{app:fazy} for details). The first observation enables us to define the observables of Alice as
\begin{equation*}
A_{j} := \big[ C_{j}^{(1)} \big]^{*} = \frac{ \lambda_{1}^{*} }{\sqrt{d}} \sum_{k} \omega^{-jk - k (k + 1) } X Z^{-k},
\end{equation*}
where $^{*}$ denotes the complex conjugation in the standard basis (which does not affect the spectrum of the operator). The power relation given in Eq.~\eqref{eq:power-relation} allows us to evaluate all the terms appearing in the Bell functional:
\begin{align*}
\bramatketq{ \Phi_{AB} }{ &A_{j}^{(n)} \otimes C_{j}^{(n)} } = \bramatketq{ \Phi_{AB} }{ A_{j}^{n} \otimes \big[ C_{j}^{(1)} \big]^{n} }\\
&= \bramatketq{ \Phi_{AB} }{ A_{j}^{n} \otimes \big[ A_{j}^{*} \big]^{n} } = \frac{1}{d} \tr \big( A_{j}^{n} \big[ A_{j}\hc \big]^{n} \big) = 1.
%
%
%
\end{align*}
This immediately implies that
%
%
\begin{equation*}
\bramatketq{ \Phi_{AB} }{ W_{d} } = \frac{1}{d} + \frac{ d - 1 }{d \sqrt{d}},
\end{equation*}
which saturates the upper bound given in Eq.~\eqref{eq:Wd-upper-bound}.

Since this quantum realisation saturates the Bell expression term-by-term, it is clear that the same will be true if we rescale terms corresponding to distinct $n$ by arbitrary non-negative numbers (the scaling must preserve the condition~\eqref{eq:u-condition}). In other words, we have obtained a whole family of Bell inequalities saturated by precisely the same set of probability points. For certification purposes in the exact case, i.e.~when the violation is maximal, all these inequalities are equivalent. In the presence of noise they will not be the same, but it is not clear what the optimal choice of weights is.

Having presented the quantum realisation saturating the upper bound, it is easy to see how the phases $\lambda_{n}$ were computed. We started from the observables of Bob given in Eq.~\eqref{eq:Bk-definition} and looked for phases which would give the resulting operators $C_{j}^{(n)}$ the desired properties (unitarity and correct spectrum of $C_{j}^{(1)}$ and the power relation specified in Eq.~\eqref{eq:power-relation}). Clearly, for a generic choice of observables of Bob this would not be possible. This shows that the measurements proposed by Ji et al.~\cite{ji08a}, which combine the standard MUB construction with carefully chosen prefactors, constitute a rather special choice.
%
%


We have shown how to achieve the quantum value using quantum systems of local dimension $d$, but it is not a priori clear that one cannot achieve it using quantum systems of lower dimension. To gain some intuition we have performed numerical search over quantum strategies of local dimension $r < d$ for $d = 3, 5$. We have used the standard see-saw procedure in which we pick a random quantum realisation (states and measurements) on $\amsbb{C}^{r} \otimes \amsbb{C}^{r}$ and optimise until we reach a local maximum. This procedure is not guaranteed to converge to the global maximum, but given a sufficient number of repetitions one can hope to explore the entire landscape of realisations in a fixed dimension. Let $\beta_{Q}^{r}$ be the largest value achievable using quantum systems of local dimension $r$. Clearly $\beta_{Q}^{1} = \beta_{L}$, whereas for our inequalities $\beta_{Q}^{d} = \beta_{Q}$, therefore, we only consider $r \in \{ 2, 3, \ldots, d - 1 \}$. For completeness for $d = 3, 5, 7$ we have also computed the classical value (by enumerating all deterministic strategies).

%

For $d = 3$ we have
\begin{equation*}
\beta_{L} = \frac{1}{3} + \frac{ 2 \cos (\pi/9) }{ 3 \sqrt{3} } \approx 0.6950, \quad \beta_{Q} = \frac{1}{3} + \frac{ 2 }{ 3 \sqrt{3} } \approx 0.7182.
\end{equation*}
The numerical search over two-qubit strategies did not yield a single instance exceeding the classical value. Therefore, we conjecture that one cannot violate the $d = 3$ inequality using qubits (similar to the inequality proposed in Ref. \cite{lim10a}).

For $d = 5$ we have
\begin{equation*}
\beta_{L} = \frac{9}{25} + \frac{8}{ 25 \sqrt{5} } \approx 0.5031, \quad \beta_{Q} = \frac{1}{5} + \frac{4}{ 5 \sqrt{5} } \approx 0.5578.
\end{equation*}
The numerical search suggests that
\begin{equation*}
\beta_{Q}^{2} = 0.5100, \quad \beta_{Q}^{3} = \beta_{Q}^{4} = 0.5373.
\end{equation*}
%
%
Interestingly enough, setting $r = 4$ always leads to a solution where the state is of Schmidt-rank 3.


For $d = 7$ we have\footnote{The analytic expression for $\beta_L$ is quite complicated, so we only give an approximate value.}
\begin{equation*}
\beta_{L} \approx 0.4001, \quad \beta_{Q} = \frac{1}{7} + \frac{6}{ 7 \sqrt{7} } \approx 0.4668.
\end{equation*}
One might ask whether these Bell inequalities correspond to facets of the local set, but we believe that this is not the case. For $d = 3$ and $d = 5$ we have found all the deterministic points saturating the local bound (9 and 125 points, respectively) and these did not correspond to a facet. We conjecture that the inequalities for larger values of $d$ behave in analogous manner.
\section{Quantum realisations which achieve the maximal violation}
\label{sec:quantum-realisations}
In the previous section we have proposed a new Bell functional and shown that the quantum value can be achieved by performing mutually unbiased measurements on the maximally entangled state of dimension $d$. A natural follow-up question is whether this Bell functional is a self-test, i.e.~whether this is the only manner of achieving the quantum value (up to the standard equivalences).

In this section we show that the optimal observables are fully characterised by simple algebraic relations. For $d = 3$ these conditions are sufficient to explicitly derive the form of the observables, which leads to a complete self-testing statement (although the extra freedom coming from the transposition map must be included). On the other hand, for $d = 5$ and $d = 7$ we have found additional, inequivalent quantum realisations, which implies that the introduced Bell functionals are not self-tests.
\subsection{Necessary and sufficient algebraic conditions}
\label{sec:neccessary-and-sufficient}
As explained in Section~\ref{sec:bell-operator} a tight SOS decomposition leads to explicit algebraic conditions that every optimal realisation must satisfy. To achieve the quantum value every term in Eq.~\eqref{eq:Wd-sum-Tn} must be saturated, i.e.~$\tr (T_{n} \rho_{AB}) = 2 d$ for all $n \in \{ 1, 2, \ldots, (d	 - 1)/2 \}$. By the SOS decomposition given in Eq.~\eqref{eq:Tn-SOS} this implies that
\begin{gather}
\label{eq:Aj-condition}
\tr \left[ A_{j}^{(n)} A_{j}^{(-n)} \rho_{A} \right] = 1,\\
\label{eq:Bk-condition}
\tr \left[ B_{k}^{(-n)} B_{k}^{(n)} \rho_{B} \right] = 1,\\
\label{eq:Lj-condition}
L_{j}^{(n)} \rho_{AB} = 0.
\end{gather}
for all $j, k \in \{0, 1, \ldots, d - 1\}$ and $n \in \{ 1, 2, \ldots, d - 1 \}$. It is easy to see that swapping the roles of $L_{j}^{(n)}$ and $\big[ L_{j}^{(n)}\big] \hc$ leads to an analogous SOS decomposition, which implies
\begin{equation}
\label{eq:Ljhc-condition}
\big[ L_{j}^{(n)} \big]\hc \rho_{AB} = 0.
\end{equation}
Conditions~\eqref{eq:Aj-condition} and~\eqref{eq:Bk-condition} simply require that the observables of Alice and Bob correspond to measurements which are projective on the (local) support of the state. Under the assumption that the reduced states are full-rank, we deduce that all the measurements are projective. As explained in Section~\ref{sec:measurements-observables} this allows us to write $A_{j}^{(n)} = A_{j}^{n}$ for $A_{j} := A_{j}^{(1)}$, where $A_{j}$ is a unitary satisfying $A_{j}^{d} = \mathbb{1}$. Similarly $B_{k}^{(n)} = B_{k}^{n}$ for $B_{k} := B_{k}^{(1)}$, which is unitary and satisfies $B_{k}^{d} = \mathbb{1}$.

Conditions~\eqref{eq:Lj-condition} and~\eqref{eq:Ljhc-condition}, on the other hand, imply some relations between the observables of Alice and Bob, namely:
\begin{gather}
\label{eq:Aj-Cj-relation-1}
( A_{j}^{-n} \otimes \mathbb{1} ) \rho_{AB} = ( \mathbb{1} \otimes C_{j}^{(n)} ) \rho_{AB},\\
\label{eq:Aj-Cj-relation-2}
( A_{j}^{n} \otimes \mathbb{1} ) \rho_{AB} = ( \mathbb{1} \otimes C_{j}^{(-n)} ) \rho_{AB}.
\end{gather}
for $n \in \{1, 2, \ldots, d - 1\}$. Our goal now is to infer what impact these two relations have on the form of the operators $C_{j}^{(n)}$. Let us start with $n = 1$. The fact that the observables of Alice are unitary implies that
\begin{align*}
\rho_{AB} &= ( A_{j} A_{j}^{-1} \otimes \mathbb{1} ) \rho_{AB} = ( A_{j} \otimes C_{j}^{(1)} ) \rho_{AB}\\
&= ( \mathbb{1} \otimes C_{j}^{(1)} C_{j}^{(-1)} ) \rho_{AB}.
\end{align*}
Tracing out the subsystem of Alice and using the fact that $\rho_{B}$ is full-rank gives $C_{j}^{(1)} C_{j}^{(-1)} = \mathbb{1}$. Since $C_{j}^{(-1)} = \big[ C_{j}^{(1)} \big]\hc$, this implies that for all $j \in \{ 0, 1, \ldots, d - 1 \}$ the operators $C_{j}^{(1)}$ are unitary. Moreover, since $A_{j}^{-d} = \mathbb{1}$, we have
\begin{align*}
\rho_{AB} &= ( A_{j}^{-d} \otimes \mathbb{1} ) \rho_{AB} = \big( \mathbb{1} \otimes \big[ C_{j}^{(1)} \big]^{d} \big) \rho_{AB}
\end{align*}
and we conclude that $\big[ C_{j}^{(1)} \big]^{d} = \mathbb{1}$, i.e.~it has the correct spectrum.
%
%
%
%
%
%
Moreover, for an arbitrary integer $t \in \{ 1, 2, \ldots, d - 1 \}$ we can write
\begin{equation*}
( \mathbb{1} \otimes C_{j}^{(t)} ) \rho_{AB} = ( A_{j}^{-t} \otimes \mathbb{1} ) \rho_{AB} = ( \mathbb{1} \otimes \big[ C_{j}^{(1)} \big]^{t} ) \rho_{AB},
\end{equation*}
where we have used the relation~\eqref{eq:Aj-Cj-relation-1} twice: for $n = t$ and $n = 1$. This immediately implies that
%
%
%
%
%
\begin{equation}
\label{eq:power-relation-2}
C_{j}^{(t)} = \big[ C_{j}^{(1)} \big]^{t}
\end{equation}
for all $t \in \{1, 2, \ldots, n - 1\}$.
%
%
%
%
%
%
%
%
%

In the previous paragraph we have established that the maximal violation requires the operators $C_{j}^{(n)}$ to be unitary, have the correct spectrum and satisfy the power relation~\eqref{eq:power-relation-2}. Since the operators $C_{j}^{(n)}$ are constructed out of the observables of Bob, these conditions are restrictions on the observables $B_{k}$. However, it follows immediately from the calculations presented in Section~\ref{sec:modified-BM} that any observables satisfying these conditions can be used to construct the entire quantum realisation: we simply take the maximally entangled state and define the observables of Alice as $A_{j} := \big[ C_{j}^{(1)} \big]^{*}$. This shows that this characterisation is tight: a set of observables of Bob is capable of producing the maximal violation (on a state that is locally full rank) if and only if the resulting operators $C_{j}^{(n)}$ are unitary, have the correct spectrum and satisfy the power relation.
\subsection{A complete self-testing statement for $d = 3$}
\label{sec:self-testing}
For $d = 3$ the algebraic characterisation is sufficient to derive the explicit form of the observables. In this case the condition $\big[ C_{j}^{(1)} \big]^{3} = \mathbb{1}$ is equivalent to
\begin{equation*}
\big[ C_{j}^{(1)} \big]\hc = \big[ C_{j}^{(1)} \big]^{2},
\end{equation*}
which can be rewritten as
\begin{equation*}
\frac{ \lambda_{1}^{*} }{\sqrt{3}} \sum_{k} \omega^{-jk} B_{k}\hc = \frac{ \lambda_{1}^{2} }{3} \sum_{kk'} \omega^{j (k + k')} B_{k} B_{k'}
\end{equation*}
for $\om[]{3}$ and $\lambda_{1} = e^{ - \mathbbm{i} \pi / 18}$. Since the observables of Bob are projective, we have $B_{j}^{2} = B_{j}\hc$, which leads to
\begin{equation}
\label{eq:Bkdagger-anticommutators}
- \omega^{2} \sum_{k} \omega^{-jk} B_{k}\hc = \sum_{k \neq k'} \omega^{j (k + k')} B_{k} B_{k'},
\end{equation}
where
we have used the fact that
\begin{equation*}
\frac{ \lambda_{1}^{*} \sqrt{3} }{ \lambda_{1}^{2} } - 1 = - \omega^{2}.
\end{equation*}
By taking suitable linear combinations of Eq.~\eqref{eq:Bkdagger-anticommutators} corresponding to distinct values of $j \in \{0, 1, 2\}$ we arrive at
\begin{align*}
\comrelation{0}{1}{2},\\
\comrelation{2}{0}{1},\\
\comrelation{1}{2}{0}
\end{align*}
and these relations turn out to be sufficient to reconstruct the observables of Bob.

As explained in Section~\ref{sec:two-approaches} sometimes the standard equivalences must be supplemented by the freedom resulting from the transposition map and this is precisely what happens in this case. In Appendix~\ref{app:optimal-observables-d=3} we show that projectivity and the commutation relations above imply that the Hilbert space of Bob $\sH_{B}$ contains a qutrit, i.e.~$\sH_{B} \equiv \sH_{B'} \otimes \sH_{B''}$ for $\sH_{B'} \equiv \amsbb{C}^{3}$. Moreover, one can find a unitary $U_{B} : \sH_{B} \to \sH_{B'} \otimes \sH_{B''}$ such that
\begin{equation*}
\bobobservablesBM,
\end{equation*}
where the \emph{canonical observables} are given by
\canonicalobservables
and $Q_{1}, Q_{2}$ are orthogonal projectors satisfying $Q_{1} + Q_{2} = \mathbb{1}_{B''}$. These two projectors identify the orthogonal subspaces corresponding to the two inequivalent solutions. Since the Bell functional is symmetric, we obtain analogous relations for the observables of Alice: we conclude that $\sH_{A} \equiv \sH_{A'} \otimes \sH_{A''}$ for $\sH_{A'} \equiv \amsbb{C}^{3}$ and that one can find a unitary $U_{A} : \sH_{A} \to \sH_{A'} \otimes \sH_{A''}$ such that
\begin{equation*}
\aliceobservables,
\end{equation*}
where $P_{1}, P_{2}$ are orthogonal projectors satisfying $P_{1} + P_{2} = \mathbb{1}_{A''}$. This characterisation allows us to write down the Bell operator as
\begin{equation}
U W_{3} U\hc = \sum_{xy} W_{xy} \otimes P_{x} \otimes Q_{y},
\end{equation}
where $U := U_{A} \otimes U_{B}$, the summation goes over $x, y \in \{0, 1\}$ and $W_{xy}$ is the two-qutrit Bell operator corresponding to the canonical observables $A_{j} = O_{j}^{(x)}$ and $B_{k} = O_{k}^{(y)}$. The two-qutrit Bell operators can be diagonalised explicitly and we find that only $W_{01}$ and $W_{10}$ contain $\mu = \frac{1}{3} + \frac{ 2 }{ 3 \sqrt{3} }$ as an eigenvalue. In both cases the corresponding eigenspace is 1-dimensional and the eigenvector is simply the maximally entangled state of two qutrits: $\ket{\Phi} := ( \ket{00} + \ket{11} + \ket{22} ) /\sqrt{3}$. It follows that any state that achieves the maximal violation must be of the form
\begin{equation}
\label{eq:optimal-state}
\state,
\end{equation}
where $\sigma_{A''B''}$ is an arbitrary state satisfying
\begin{equation}
\label{eq:trace-projectors}
\tr \big[ ( P_{0} \otimes Q_{1} + P_{1} \otimes Q_{0} ) \sigma_{A''B''} \big] = 1.
\end{equation}
Condition~\eqref{eq:optimal-state} implies that the maximal violation certifies the maximally entangled state of two qutrits, which can be extracted by tracing out the auxiliary registers $A''$ and $B''$. Condition~\eqref{eq:trace-projectors} shows that the maximal violation is only possible when the observables of Alice and Bob belong to the two inequivalent classes.

This self-testing result has a couple of immediate consequences. First of all, the maximal violation is achieved by a single probability point in the quantum set of correlations, which implies that this is an exposed\footnote{A point in the quantum set of correlations is called \textit{exposed} if it is the unique maximiser of some Bell functional. Note, however, that although every exposed point is extremal, the converse does not hold (see Ref.~\cite{goh18a} for an example of an extremal but not exposed point of the quantum set).} point of the quantum set. Moreover, the marginal distributions of outcomes are uniform and it is easy to see that they remain uniform even for an eavesdropper who holds a purification. Intuitively speaking, this comes from the fact that the randomness is produced from the pure entangled state $\Phi_{A'B'}$, whereas the adversary can only hold the purification of $\sigma_{A''B''}$. The maximal violation certifies $\log 3$ bits of local randomness (for each input of Alice or Bob) against an external adversary, which makes this Bell inequality a good candidate for cryptographic tasks like generation of certified randomness or secret key.
\subsection{Inequivalent quantum realisations for $d = 5$ and $d = 7$}
Having fully solved the $d = 3$ case one might expect to obtain similar results for higher dimensions. Perhaps surprisingly, this turns out not to be the case: we have found that in dimensions $d = 5$ and $d = 7$ there exist additional, inequivalent choices of local observables which give rise to the maximal violation. The construction is a simple generalisation of the original observables given in Eq.~\eqref{eq:Bk-definition}. It is easy to check that for arbitrary $q \in \{1, 2, \ldots, d - 1\}$ and arbitrary function $h : \{0, 1, \ldots, d - 1\} \to \{0, 1, \ldots, d - 1\}$ the operators 
\begin{equation}
\label{eq:Bk-generalised}
B_{k} := \omega^{ h(k) } X Z^{qk}.
\end{equation}
constitute a valid set of observables.
We have found that for $d = 5, 7$ for every $q = \{1, 2, \ldots, d - 1\}$ there exists a function $h$ which ensures that these observables of Bob give rise to valid operators $C_{j}^{(n)}$. To see that these are not equivalent to the original observables, it suffices to look at commutation relations: the observables defined above satisfy
\begin{equation*}
\omega^{ q } B_{0} B_{1} = B_{1} B_{0}.
\end{equation*}
On the other hand, the original observables from Eq.~\eqref{eq:Bk-definition} satisfy
\begin{equation*}
\omega B_{0} B_{1} = B_{1} B_{0},
\end{equation*}
whereas their transposes satisfy
\begin{equation*}
\omega^{d-1} B_{0}\tran B_{1}\tran = B_{1}\tran B_{0}\tran.
\end{equation*}
Clearly, whenever $d \geq 5$ choosing $q = 2$ in Eq.~\eqref{eq:Bk-generalised} gives rise to a solution which is neither unitarily equivalent to the original realisation nor to its transpose. Nevertheless, all these realisations use the maximally entangled state of local dimension $d$. Therefore, it is possible that the maximal violation certifies the state, but not the measurements. Finally, let us point out that these distinct quantum realisations lead to the same probability point. Therefore, one might conjecture that despite the ambiguity at the level of quantum realisations, the Bell functional is maximised by a single probability point.
%
%
\section{Conclusions}
\label{sec:conclusions}
The Buhrman-Massar generalisation of the CHSH inequality, despite its apparent simplicity, turns out to be hard to analyse. In particular, despite both analytical and numerical studies the behaviour of its quantum value is not known. In this work we propose a simple modification which allows us to analyse the resulting functional. More specifically, we propose a family of Bell functionals labelled by prime $d \geq 3$, whose quantum value can be determined analytically. For every such $d$ we give an explicit realisation which achieves the quantum value in which Alice and Bob share the maximally entangled state of local dimension $d$ and perform local rank-1 projective measurements which are pairwise mutually unbiased. We thus generalise the CHSH Bell inequality to $d$-outcome Bell scenarios, preserving at the same time
the most relevant  of its properties: (i) analytical computability of its maximal quantum value and (ii) achievability of the maximal quantum violation by the maximally entangled state and mutually unbiased bases.

Once we know the quantum value and we have a particular quantum realisation of it, one might ask whether this realisation is unique (up to some well-understood equivalences). The SOS decomposition yields explicit algebraic relations that the local measurements must satisfy. For $d = 3$ these can be fully resolved: the quantum realisation is unique up to extra degrees of freedom, local unitaries and transposition. Unfortunately, the situation becomes more complicated for higher $d$. For $d = 5$ and $d = 7$ we have found alternative realisations which despite apparent similarity (they also employ the maximally entangled state and mutually unbiased bases) are not equivalent according to the definition of self-testing (even if we allow for the extra freedom coming from the transposition map).

The first follow-up question that arises from our work is whether the new Bell functional is a self-test in some weaker sense. We conjecture that the maximal violation requires maximal entanglement and mutually unbiased bases, but providing a mathematical formulation of this conjecture is not trivial: for instance, it is clear that not all possible combinations of mutually unbiased bases will give rise to the maximal violation.

Another interesting direction would be to investigate whether the new Bell functionals can be modified to be maximally violated by other entangled states, keeping at the same time their attractive features, i.e.~analytically computable quantum value and an explicit quantum realisation achieving it. Similarly to how adding a marginal term to the CHSH inequality yields the tilted CHSH inequality~\cite{acin12a}, which has found numerous applications, one might add an analogous local term to the new Bell operator and investigate the consequences. On one hand, one might expect such an inequality to be maximally violated by a non-maximally entangled state of dimension $d$. On the other hand, given the recent non-closure results~\cite{slofstra17a, dykema17a, coladangelo18b}, it is not even guaranteed that the maximal violation can be achieved by finite-dimensional states. Therefore, adding the marginal term could have much more dramatic consequences that in the case of binary outcomes, which makes the problem even more interesting to study.
\section*{Acknowledgements}
We would like to thank Antonio Ac{\'i}n, Joseph Bowles, Roberto Ferrara, Jinhyoung Lee, Giacomo De Palma and Wonmin Son for fruitful discussions. R.~A.~acknowledges the support from the Foundation for Polish Science through the First Team project (First TEAM/2017-4/31) co-financed by the European Union under the European Regional Development Fund. J.~K.~acknowledges support from the POLONEZ programme which has received funding from the European Union's Horizon 2020 research and innovation programme under the Marie Sk{\l}odowska-Curie grant (grant no.~665778), the European Union's Horizon 2020 research and innovation programme under the Marie Sk{\l}odowska-Curie Action ROSETTA (grant no.~749316), the European Research Council (grant no.~337603) and VILLUM FONDEN via the QMATH Centre of Excellence (grant no.~10059). This project has received funding from the European Union's Horizon 2020 research and innovation programme under the Marie-Sk{\l}odowska-Curie grant agreement No 748549. J.~T. acknowledges support from the Alexander von Humboldt Foundation. I. \v{S}., F. B. and A. S. acknowledge the support from Spanish MINECO (QIBEQI FIS2016-80773-P, Severo Ochoa SEV-2015-0522 and a Severo Ochoa PhD fellowship), Fundacio Cellex, Generalitat de Catalunya (SGR875 and CERCA Program), ERC CoG QITBOX and AXA Chair in Quantum Information Science.
%
%
\begin{widetext}
\appendix
\section{Measurements and observables}
\label{app:measurements-observables}
In this appendix we prove some properties stated in Section~\ref{sec:measurements-observables} and let us start with the following proposition.
\begin{prop}
\label{prop:non-projective-observables}
Let $\{ F_{a} \}_{a}$ be a collection of positive semidefinite operators acting on a finite-dimensional Hilbert space satisfying $\sum_{a} F_{a} = \mathbb{1}$. Then, for arbitrary phases $\phi_{a} \in [0, 2\pi)$ the operator
\begin{equation*}
A := \sum_{a} e^{ \mathbbm{i} \phi_{a} } F_{a}
\end{equation*}
satisfies $A\hc A \leq \mathbb{1}$. Moreover, if the phases are distinct ($\phi_{a} = \phi_{b} \iff a = b$), the operator equality $A\hc A = \mathbb{1}$ holds iff the operators are orthogonal projectors, i.e.~$F_{a} F_{b} = \delta_{ab} F_{a}$ with $\delta_{ab}$ being the standard Kronecker's delta.
\end{prop}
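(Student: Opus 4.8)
The plan is to realise $A$ as the compression of a diagonal unitary, which makes both claims transparent. Introduce the isometry $V : \cH \to \cH \otimes \amsbb{C}^{d}$ defined by $V\ket{\psi} = \sum_{a} (F_{a}^{1/2}\ket{\psi}) \otimes \ket{a}$; the completeness relation $\sum_{a} F_{a} = \mathbb{1}$ gives $V\hc V = \mathbb{1}$, so $V$ is genuinely an isometry and $VV\hc$ is the orthogonal projector onto its range (in particular $VV\hc \leq \mathbb{1}$). Together with the diagonal unitary $D := \mathbb{1} \otimes \sum_{a} e^{\mathbbm{i}\phi_{a}} \ketbra{a}{a}$ one checks directly that $V\hc D V = \sum_{a} e^{\mathbbm{i}\phi_{a}} F_{a} = A$, since $V\hc$ sends $\ket{\phi}\otimes\ket{b}$ to $F_{b}^{1/2}\ket{\phi}$.

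The inequality then follows immediately, because $D$ is unitary and $VV\hc \leq \mathbb{1}$:
\begin{equation*}
A\hc A = V\hc D\hc (VV\hc) D V \leq V\hc D\hc D V = V\hc V = \mathbb{1}.
\end{equation*}
For the equality case I would subtract, obtaining $\mathbb{1} - A\hc A = V\hc D\hc (\mathbb{1} - VV\hc) D V = [(\mathbb{1}-VV\hc)DV]\hc [(\mathbb{1}-VV\hc)DV]$, so that $A\hc A = \mathbb{1}$ holds if and only if $(\mathbb{1}-VV\hc)DV = 0$, i.e.\ if and only if the range of $V$ is invariant under $D$. The \emph{if} direction of the proposition is then the trivial check that a family of orthogonal projectors makes $A$ normal with $A\hc A = \sum_{a} F_{a} = \mathbb{1}$, so the work is all in the converse.

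\emph{Converse.} Here is the point where distinctness of the phases enters. Because the $e^{\mathbbm{i}\phi_{a}}$ are pairwise distinct, each coordinate projector $\Pi_{a} := \mathbb{1}\otimes\ketbra{a}{a}$ is the spectral projector of $D$ onto the eigenvalue $e^{\mathbbm{i}\phi_{a}}$, hence a Lagrange-interpolation polynomial in $D$; consequently the $D$-invariance of $\mathrm{range}(V)$ forces each $\Pi_{a}$ to preserve $\mathrm{range}(V)$. Applying this to $V\ket{\psi}$ shows that for every $\ket{\psi}$ there is a $\ket{\chi}$ with $V\ket{\chi} = \Pi_{a} V\ket{\psi} = (F_{a}^{1/2}\ket{\psi})\otimes\ket{a}$. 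Comparing the $\ket{b}$ components of $V\ket{\chi} = \sum_{b} (F_{b}^{1/2}\ket{\chi})\otimes\ket{b}$ gives $F_{b}^{1/2}\ket{\chi} = 0$ for $b\neq a$ and $F_{a}^{1/2}\ket{\chi} = F_{a}^{1/2}\ket{\psi}$; the first relation together with completeness forces $F_{a}\ket{\chi} = \ket{\chi}$, hence $F_{a}^{1/2}\ket{\chi}=\ket{\chi}$ and therefore $\ket{\chi} = F_{a}^{1/2}\ket{\psi}$. Substituting back yields $F_{a}\ket{\psi} = F_{a}^{1/2}\ket{\psi}$ and $F_{b}^{1/2}F_{a}^{1/2}\ket{\psi} = 0$ for all $\ket{\psi}$, i.e.\ $F_{a} = F_{a}^{1/2}$ (so $F_{a}^{2} = F_{a}$) and $F_{a}F_{b} = 0$ for $a\neq b$, which is precisely $F_{a}F_{b} = \delta_{ab}F_{a}$.

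I expect the main obstacle to be this converse, and specifically the step ``$D$-invariance of $\mathrm{range}(V)$ $\Rightarrow$ each $\Pi_{a}$ preserves $\mathrm{range}(V)$'', as this is exactly where the hypothesis of distinct phases is indispensable: if two phases coincided one could merge the corresponding outcomes and still obtain a unitary $A$ with non-projective $F_{a}$, so the passage from the invariant-subspace condition to the pointwise projector relations would fail. Everything else is routine operator manipulation.
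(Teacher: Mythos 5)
Your proof is correct, and it splits into a part that coincides with the paper and a part that is genuinely different. The first half---realising $A = V\hc D V$ with the isometry $V = \sum_a \sqrt{F_a}\otimes\ket{a}$ and the diagonal unitary $D$, then bounding $A\hc A$ via $VV\hc \leq \mathbb{1}$---is exactly the paper's argument. For the equality case, however, the paper proceeds by a scalar (trace) argument: it expands $A\hc A$ and computes $\tr(A\hc A) = \sum_a \tr(F_a^2) + 2\sum_{a\neq b}\cos(\phi_a-\phi_b)\tr(F_aF_b) \leq \tr\,\mathbb{1}$, so that $A\hc A = \mathbb{1}$ forces $\cos(\phi_a-\phi_b)\tr(F_aF_b) = \tr(F_aF_b)$ for all $a \neq b$; distinctness of the phases gives $\cos(\phi_a-\phi_b)<1$, hence $\tr(F_aF_b)=0$, hence $F_aF_b=0$ by positivity, and projectivity then follows from tightness of $\sum_a F_a^2 \leq \sum_a F_a = \mathbb{1}$. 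Your route instead stays at the operator level: writing $\mathbb{1} - A\hc A$ as the explicit square $[(\mathbb{1}-VV\hc)DV]\hc[(\mathbb{1}-VV\hc)DV]$, you convert unitarity of $A$ into $D$-invariance of $\mathrm{range}(V)$, and distinctness of the phases enters by making each $\Pi_a = \mathbb{1}\otimes\ketbraq{a}$ a Lagrange polynomial in $D$, hence also range-preserving; the componentwise bookkeeping (which I checked line by line, including the step $F_a\ket{\psi} = F_a^{1/2}\ket{\psi}$ obtained by substituting $\ket{\chi} = F_a^{1/2}\ket{\psi}$ into $F_a^{1/2}\ket{\chi}=\ket{\chi}$) correctly yields $F_a^2 = F_a$ and $F_aF_b=0$. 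Both proofs are complete; the paper's is shorter and more elementary, but it relies on $\tr\,\mathbb{1} < \infty$ and so is tied to finite dimension, whereas your invariant-subspace argument needs only that the number of outcomes is finite and would survive on an infinite-dimensional Hilbert space, while also isolating the cleaner structural fact that $A$ is unitary precisely when the Naimark dilation leaves $\mathrm{range}(V)$ invariant.
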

\begin{proof}
Define
\begin{align*}
V &:= \sum_{a} \sqrt{ F_{a} } \otimes \ket{a},\\
U &:= \mathbb{1} \otimes \sum_{a} e^{ \mathbbm{i} \phi_{a} } \ketbraq{a}
\end{align*}
and note that $A = V\hc U V$. The operator $V$ is an isometry ($V\hc V = \mathbb{1}$), which implies that $V V\hc = \Pi$ for some projector $\Pi$ and in particular $V V\hc \leq \mathbb{1}$. Combining this with the fact that $U$ is a unitary immediately implies
\begin{equation*}
A\hc A = V\hc U\hc V V\hc U V \leq V\hc U\hc \cdot \mathbb{1} \cdot U V = V\hc U\hc U V = V\hc V = \mathbb{1}.
\end{equation*}
To prove the second part note that
\begin{equation*}
A\hc A = \sum_{a} F_{a}^{2} + \sum_{a \neq b} e^{\mathbbm{i} ( \phi_{a} - \phi_{b} )} F_{b} F_{a} + e^{- \mathbbm{i} ( \phi_{a} - \phi_{b} )} F_{a} F_{b}.
\end{equation*}
The ``if'' part is clear, so let us focus on the ``only if'' statement. The trace of $A\hc A$ satisfies
\begin{align*}
\tr \big( A\hc A \big) &= \sum_{a} \tr \big( F_{a}^{2} \big) + 2 \sum_{a \neq b} \cos ( \phi_{a} - \phi_{b} ) \tr( F_{a} F_{b} )\\
&\leq \sum_{a} \tr \big( F_{a}^{2} \big) + 2 \sum_{a \neq b} \tr( F_{a} F_{b} ) = \tr \Big[ \big( \sum_{a} F_{a} \big)^{2} \Big] = \tr \mathbb{1}.
\end{align*}
If $A\hc A = \mathbb{1}$, this inequality is tight, which means that for all $a \neq b$ we have
\begin{equation*}
\cos ( \phi_{a} - \phi_{b} ) \tr( F_{a} F_{b} ) = \tr( F_{a} F_{b} ).
\end{equation*}
Since the phases are distinct, we deduce that $\tr ( F_{a} F_{b} ) = 0$, which for positive semidefinite operators implies orthogonality, i.e.~$F_{a} F_{b} = 0$ for $a \neq b$. The fact that the upper bound
\begin{equation*}
A\hc A = \sum_{a} F_{a}^{2} \leq \sum_{a} F_{a} = \mathbb{1}
\end{equation*}
is tight implies that $F_{a}^{2} = F_{a}$ for all $a$.
\end{proof}
Let us now apply these results to the operators $A^{(n)}$. Recall that for a $d$-outcome measurement given by $\{ F_{a} \}_{a = 0}^{d - 1}$ the operator $A^{(n)}$ is defined as
\begin{equation*}
\Andef
\end{equation*}
for $\om{d}$. The first part of Proposition~\ref{prop:non-projective-observables} immediately implies that $[A^{(n)}]\hc A^{(n)} \leq \mathbb{1}$ for all $n$. If the original measurement is projective, i.e.~the measurement operators are orthogonal projectors $F_{a} F_{b} = \delta_{ab} F_{a}$, the measurement can be encoded in an observable $A := A^{(1)}$ (it is easy to verify that $A^{(n)} = A^{n}$). This observable is unitary $A\hc A = A A\hc = \mathbb{1}$ and satisfies $A^{d} = \mathbb{1}$. The second part of Proposition~\ref{prop:non-projective-observables} allows us to deduce that the measurement is projective by looking only at the $A^{(n)}$ operators: the operator $A^{(1)}$ satisfies the condition of the proposition, so if $[A^{(1)}]\hc A^{(1)} = \mathbb{1}$, then the measurement is projective. In fact, the same argument works for $A^{(k)}$ for any integer $k$ which is coprime to $d$.
\section{Optimal observables for $d = 3$}
\label{app:optimal-observables-d=3}
In this appendix we show that the commutation relations derived in Sec.~\ref{sec:self-testing} allow us to reconstruct the optimal observables. Let us start with two technical propositions.
\begin{prop}
\label{prop:commutation-relation}
Let $d$ be an arbitrary integer and let $X$ and $Z$ be the corresponding Heisenberg-\!Weyl operators
\begin{equation*}
X := \sum_{j = 0}^{d - 1} \ketbra{ j + 1 }{j} \nbox{and} Z := \sum_{j = 0}^{d - 1} \omega^{j} \ketbraq{j},
\end{equation*}
where $\ket{d} \equiv \ket{0}$. Let $B_{0}, B_{1}$ be unitaries acting on a finite-dimensional Hilbert space $\sH$ satisfying $B_{0}^{d} = B_{1}^{d} = \mathbb{1}$. Suppose $B_{0}$ and $B_{1}$ satisfy the commutation relation
\begin{equation*}
B_{0} B_{1} = \omega^{q} B_{1} B_{0},
\end{equation*}
where $q$ and $d$ are coprime. Then, $\dim (\sH) =  d \cdot t$ for some integer $t \geq 1$ and there exists a unitary $U : \sH \to \amsbb{C}^{d} \otimes \amsbb{C}^{t}$ such that
\begin{equation}
\label{eq:canonical-choice}
U B_{0} U^{\hc} = Z^{q} \otimes \mathbb{1} \nbox{and} U B_{1} U^{\hc} = X \otimes \mathbb{1}.
\end{equation}
\end{prop}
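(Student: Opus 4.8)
The plan is to prove this as a finite-dimensional instance of the Stone--von Neumann uniqueness theorem, exploiting the spectral structure of $B_0$ together with the ladder action of $B_1$. First I would diagonalise $B_0$: since it is unitary and $B_0^{d} = \mathbb{1}$, its minimal polynomial divides $x^{d} - 1$, so $B_0$ is diagonalisable with eigenvalues among $\{1, \omega, \ldots, \omega^{d-1}\}$ and pairwise orthogonal eigenspaces. I write $V_m$ for the eigenspace with eigenvalue $\omega^{m}$ and $P_m$ for the corresponding orthogonal projector, so that $\sH = \bigoplus_{m} V_m$.

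The key structural input is that $B_1$ acts as a shift on this decomposition. Rewriting the relation as $B_0 B_1 = \omega^{q} B_1 B_0$, for any $\ket{v} \in V_m$ one finds $B_0 B_1 \ket{v} = \omega^{m+q} B_1 \ket{v}$, so $B_1 V_m \subseteq V_{m+q}$ (indices mod $d$). Applying the same reasoning to $B_1^{-1} = B_1^{d-1}$, which shifts by $-q$, yields $V_{m+q} \subseteq B_1 V_m$, so $B_1$ restricts to a bijective isometry $V_m \to V_{m+q}$; in particular the eigenspaces in a single $\langle q\rangle$-orbit share one common dimension. Because $\gcd(q,d)=1$, the residue $q$ generates $\amsbb{Z}/d\amsbb{Z}$ under addition, so the orbit is all of $\amsbb{Z}/d\amsbb{Z}$: the set of eigenvalues is nonempty and invariant under multiplication by $\omega^{q}$, hence every $\omega^{m}$ occurs, and all $d$ eigenspaces have one common dimension $t \geq 1$. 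This immediately gives $\dim(\sH) = d \cdot t$.

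With the dimension count in hand I would build $U$ explicitly. Fix an orthonormal basis $\{\ket{e_a}\}_{a=1}^{t}$ of $V_0$ and set $\ket{f_{j,a}} := B_1^{j}\ket{e_a}$ for $j \in \{0, \ldots, d-1\}$; by the shift property $\ket{f_{j,a}} \in V_{qj}$, and since $B_1$ is an isometry, $\{\ket{f_{j,a}}\}_a$ is an orthonormal basis of $V_{qj}$ for each $j$. As $j$ ranges over $\{0, \ldots, d-1\}$ the spaces $V_{qj}$ exhaust all eigenspaces, so $\{\ket{f_{j,a}}\}_{j,a}$ is an orthonormal basis of $\sH$, and $U\ket{f_{j,a}} := \ket{j}\otimes\ket{a}$ defines a unitary. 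The two required identities then follow by direct computation: $B_1\ket{f_{j,a}} = \ket{f_{j+1 \bmod d,\,a}}$ gives $U B_1 U\hc = X \otimes \mathbb{1}$, while $B_0\ket{f_{j,a}} = \omega^{qj}\ket{f_{j,a}}$ gives $U B_0 U\hc = Z^{q} \otimes \mathbb{1}$, using that the $j$-th computational basis vector is the $\omega^{qj}$-eigenvector of $Z^{q}$.

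I expect the only genuinely delicate point to be the bookkeeping that establishes $\dim(\sH) = d \cdot t$: one must verify that the eigenvalue set is nonempty and invariant under the $\omega^{q}$-shift, that coprimality forces this orbit to cover every residue class so that no eigenvalue is missing, and that the isometry property of $B_1$ forces all eigenspaces to have exactly the same dimension. Once this is secured, the construction of $U$ and the verification of the canonical relations are routine, and the freedom in the choice of the basis $\{\ket{e_a}\}$ of $V_0$ transparently reflects the residual unitary freedom on the multiplicity space $\amsbb{C}^{t}$.
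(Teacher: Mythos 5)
Your proof is correct and follows essentially the same route as the paper's: diagonalise $B_0$, use the commutation relation to show that $B_1$ acts as a ladder between the eigenspaces, and transport an orthonormal basis of a single eigenspace around the orbit to construct $U$ explicitly. If anything, your version is slightly more complete, since the paper starts directly from the $\lambda = 1$ eigenspace of $B_0$ without explicitly verifying that this eigenvalue occurs or that all eigenspaces share the same dimension---precisely the points your spectrum-invariance and bijective-isometry arguments settle.
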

\begin{proof}
Let $t \in \amsbb{N}$ be the multiplicity of the $\lambda = 1$ eigenvalue of $B_{0}$ and let $\{ \ket{e_{j}^{(0)}} \}_{j = 0}^{t - 1}$ be an orthonormal basis of the corresponding eigenspace. For $k = 1, 2, \ldots, d - 1$ define
\begin{equation*}
\ket{ e_{j}^{(k)} } := B_{1}^{k} \ket{ e_{j}^{(0)} }.
\end{equation*}
The commutation relation implies that for any $k \in \amsbb{N}$
\begin{equation*}
B_{0} B_{1}^{k} = \omega^{qk} B_{1}^{k} B_{0}
\end{equation*}
and therefore
\begin{equation*}
B_{0} \ket{ e_{j}^{(k)} } = \omega^{qk} \ket{ e_{j}^{(k)} }.
\end{equation*}
It is also clear that the vectors $\{ \ket{ e_{j}^{(k)} } \}_{j = 0}^{t - 1}$ span the eigenspace of $B_{0}$ corresponding to the eigenvalue $\omega^{qk}$. Since $q$ and $d$ are coprime, going over $k = 1, 2, \ldots, d - 1$ recovers all the eigenspaces of $B_{0}$. This allows us to deduce that $\dim( \sH ) = d \cdot t$ and that the set $\{ \ket{ e_{j}^{(k)} } \}$ for $j = 0, 1, \ldots, t - 1$ and $k = 0, 1, \ldots, d - 1$ constitutes an orthonormal basis for $\sH$. Writing the two observables in this basis gives
\begin{align*}
B_{0} &= \sum_{j = 0}^{t - 1} \sum_{k = 0}^{d - 1} \omega^{qk} \ketbraq{ e_{j}^{(k)} },\\
B_{1} &= \sum_{j = 0}^{t - 1} \sum_{k = 0}^{d - 1} \ketbra{ e_{j}^{(k + 1)} }{ e_{j}^{(k)} },
\end{align*}
where $\ket{ e_{j}^{(d)} } \equiv \ket{ e_{j}^{(0)} }$. The desired unitary $U : \sH \to \amsbb{C}^{d} \otimes \amsbb{C}^{t}$ is given by
\begin{equation*}
U \ket{ e_{j}^{(k)} } = \ket{k} \ket{g_{j}},
\end{equation*}
where $\{ \ket{k} \}_{k = 0}^{d - 1}$ is the standard basis on $\amsbb{C}^{d}$ and $\{ \ket{ g_{j} } \}_{j = 0}^{t - 1}$ is an arbitrary basis on $\amsbb{C}^{t}$.
\end{proof}
In Eq.~\eqref{eq:canonical-choice} we have chosen the canonical observables to be $Z^{q}$ and $X$, but clearly we can replace them with any observables satisfying the right commutation relation (this is precisely what Prop. \ref{prop:commutation-relation}). For our purposes it is better to make a different choice. For $d = 3$ and $q = 1$ we choose the unitary $U$ such that
\begin{equation}
\label{eq:convenient-choice-1}
U B_{0} U^{\hc} = X \otimes \mathbb{1} \nbox{and} U B_{1} U^{\hc} = X^{2} Z \otimes \mathbb{1},
\end{equation}
whereas for $d = 3$ and $q = 2$ we choose
\begin{equation}
\label{eq:convenient-choice-2}
U B_{0} U^{\hc} = X \otimes \mathbb{1} \nbox{and} U B_{1} U^{\hc} = Z^{2} \otimes \mathbb{1}.
\end{equation}
Let us also mention that this argument could be generalised to infinite-dimensional Hilbert spaces to yield a unitary $U : \sH \to \amsbb{C}^{d} \otimes \sH'$, where both $\sH$ and $\sH'$ are infinite-dimensional.
\begin{prop}
Let $B_{0}, B_{1} \in \cL( \sH_{B} )$ be unitary operators satisfying $B_{0}^{3} = B_{1}^{3} = \mathbb{1}$. If the anticommutator $\{ B_{0}, B_{1} \}$ is unitary, then $\sH_{B} \equiv \sH_{B'} \otimes \sH_{B''}$ for $\sH_{B'} \equiv \amsbb{C}^{3}$ and there exists a unitary $U : \sH_{B} \to \sH_{B'} \otimes \sH_{B''}$ such that
\begin{align*}
U B_{0} U\hc &= X \otimes Q_{1} + X \otimes Q_{2},\\
U B_{1} U\hc &= X^{2} Z \otimes Q_{1} + Z^{2} \otimes Q_{2},
\end{align*}
where $Q_{1}$ and $Q_{2}$ are orthogonal projectors satisfying $Q_{1} + Q_{2} = \mathbb{1}_{B''}$.
\end{prop}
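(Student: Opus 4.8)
The plan is to diagonalise $B_0$ and to read off the structure of $B_1$ directly from the anticommutator hypothesis. Since $B_0$ is unitary with $B_0^3=\mathbb{1}$, its spectrum lies in $\{1,\omega,\omega^2\}$ and $\sH_B=H_0\oplus H_1\oplus H_2$, where $H_a$ is the eigenspace for the eigenvalue $\omega^a$ and $\om{3}$. Let $\Pi_a$ be the orthogonal projection onto $H_a$ and write the block components $M_{ca}:=\Pi_c B_1\Pi_a$. In this grading $B_0$ acts as the scalar $\omega^a$ on $H_a$, so I expect the unitarity of $S:=\{B_0,B_1\}$ to force $B_1$ to be purely off-diagonal; the whole argument then rests on promoting this to an \emph{invariant} splitting of $\sH_B$ into two blocks carrying the two canonical Weyl pairs.

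First I would establish the key algebraic fact that the diagonal blocks vanish. Because $\Pi_c(B_0B_1+B_1B_0)\Pi_a=(\omega^c+\omega^a)M_{ca}$, compressing the identity $S\hc S=\mathbb{1}$ to $H_a$ and inserting $\sum_c\Pi_c=\mathbb{1}$ gives $\sum_c|\omega^c+\omega^a|^2\,M_{ca}\hc M_{ca}=\Pi_a$, whereas $B_1\hc B_1=\mathbb{1}$ gives $\sum_c M_{ca}\hc M_{ca}=\Pi_a$. Subtracting and using $|2\omega^a|^2=4$ while $|\omega^c+\omega^a|^2=1$ for $c\neq a$ (since $1+\omega+\omega^2=0$) leaves $3\,M_{aa}\hc M_{aa}=0$, hence $M_{aa}=0$ for every $a$. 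Thus $B_1=N_++N_-$, where $N_+$ raises and $N_-$ lowers the grading by one, so that $B_0N_\pm B_0^{-1}=\omega^{\pm1}N_\pm$. This single step is where the anticommutator assumption enters, and I expect it to be the main obstacle: the factor $3$ that annihilates $M_{aa}$ is precisely the gap between $|2\omega^a|^2$ and $|\omega^c+\omega^a|^2$.

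Next I would extract the block structure from $B_1^3=\mathbb{1}$. Since $B_1$ is unitary of order three, $B_1^2=B_1\hc$, and separating this identity into its three graded components yields $N_+^2=N_+\hc$, $N_-^2=N_-\hc$ and $N_+N_-+N_-N_+=0$ (only the first two are needed below). From $N_+^2=N_+\hc$ one gets $N_+N_+\hc=N_+^3=N_+\hc N_+$, so $N_+$ is normal; taking the adjoint gives $N_+^4=N_+$, hence $N_+N_+\hc N_+=N_+$, so $N_+$ is a normal partial isometry and $E_+:=N_+\hc N_+=N_+^3$ is an orthogonal projector, and symmetrically for $N_-$ with $E_-:=N_-\hc N_-$. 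The relation $B_1\hc B_1=\mathbb{1}$ forces $E_++E_-=\mathbb{1}$ (its graded-diagonal part), so $E_+$ and $E_-$ are complementary orthogonal projectors. Setting $\mathcal{H}_\pm:=E_\pm\sH_B$, both are $B_0$-invariant (as $E_\pm$ is graded-diagonal and hence commutes with $B_0$) and $B_1$-invariant (since $N_-$ vanishes on $\mathcal{H}_+$ while $N_+$ maps $\mathcal{H}_+$ onto itself by normality, and vice versa on $\mathcal{H}_-$). Consequently each block inherits $B_0^3=B_1^3=\mathbb{1}$, on $\mathcal{H}_+$ we have $B_1=N_+$ with $B_0B_1=\omega B_1B_0$, and on $\mathcal{H}_-$ we have $B_1=N_-$ with $B_0B_1=\omega^2 B_1B_0$.

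Finally I would apply Proposition~\ref{prop:commutation-relation} to each block, the exponents $1$ and $2$ being coprime to $3$. On $\mathcal{H}_-$ the pair realises the commutation relation of the convenient choice~\eqref{eq:convenient-choice-1}, and on $\mathcal{H}_+$ that of~\eqref{eq:convenient-choice-2}; in both cases the proposition supplies a unitary bringing $B_0$ to $X\otimes\mathbb{1}$ and $B_1$ to $X^2Z\otimes\mathbb{1}$ (on $\mathcal{H}_-$) or to $Z^2\otimes\mathbb{1}$ (on $\mathcal{H}_+$), each block having dimension a multiple of $3$. Hence $\sH_B\equiv\amsbb{C}^3\otimes\sH_{B''}$. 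Recombining the two unitaries and letting $Q_1,Q_2$ be the projectors onto the $\mathcal{H}_-$ and $\mathcal{H}_+$ sectors of $\sH_{B''}$ then gives exactly $UB_0U\hc=X\otimes Q_1+X\otimes Q_2$ and $UB_1U\hc=X^2Z\otimes Q_1+Z^2\otimes Q_2$, which is the claimed normal form. The crux of the whole proof is the diagonal-annihilation step together with the extraction of the complementary projectors $E_\pm$ from $B_1^2=B_1\hc$, as it is the latter that yields a genuinely $(B_0,B_1)$-invariant splitting rather than merely a spectral one.
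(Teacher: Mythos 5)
Your proof is correct, and it reaches the required block structure by a genuinely different route than the paper. The paper's proof is organised around the group commutator $T = B_{0}\hc B_{1}\hc B_{0} B_{1}$: unitarity of the anticommutator is rewritten as $T + T\hc + \mathbb{1} = 0$, which forces $\spec(T) \subseteq \{\omega, \omega^{2}\}$, and then an explicit two-by-two block-matrix computation (playing $B_{0}\hc = B_{0}^{2}$ against the analogous identity for $R = B_{0}T$) shows that $B_{0}$ and $B_{1}$ preserve the two eigenspaces of $T$; on each eigenspace the pair satisfies a Weyl commutation relation and Proposition~\ref{prop:commutation-relation} finishes the argument. You instead grade $\sH_{B}$ by the eigenspaces of $B_{0}$, kill the diagonal blocks of $B_{1}$ by comparing the compressions of $\{B_0,B_1\}\hc\{B_0,B_1\} = \mathbb{1}$ and $B_{1}\hc B_{1} = \mathbb{1}$ (the coefficient gap $|2\omega^{a}|^{2} - |\omega^{c}+\omega^{a}|^{2} = 3$ is exactly right), and then extract the invariant splitting from $B_{1}^{2} = B_{1}\hc$, which makes the raising and lowering parts $N_{\pm}$ normal partial isometries whose support projections $E_{\pm}$ sum to the identity. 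The two splittings are of course the same subspaces (your $\sH_{\pm}$ are precisely the $\omega$- and $\omega^{2}$-eigenspaces of $T$), but your derivation of their invariance is structural rather than computational, and it cleanly separates the roles of the two hypotheses: the anticommutator gives off-diagonality of $B_1$ in the eigenbasis of $B_0$, while the order-three condition supplies the partial-isometry structure; this graded viewpoint also suggests how one might attack analogous questions for higher $d$. One peripheral remark: your matching of sectors to canonical pairs---$q=1$ with $(X, Z^{2})$ and $q=2$ with $(X, X^{2}Z)$---is the one consistent with $ZX = \omega XZ$, whereas the paper's text around Eqs.~\eqref{eq:convenient-choice-1} and \eqref{eq:convenient-choice-2} states the opposite assignment; the discrepancy is immaterial for the proposition, since swapping the assignment only exchanges the roles of $Q_{1}$ and $Q_{2}$.
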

\begin{proof}
The unitarity of the anticommutator reads
\begin{equation*}
\{ B_{0}\hc, B_{1}\hc \} \{ B_{0}, B_{1} \} = \mathbb{1}
\end{equation*}
and is equivalent to
\begin{equation*}
T + T\hc + \mathbb{1} = 0,
\end{equation*}
where $T = B_{0}\hc B_{1}\hc B_{0} B_{1}$. This implies that the eigenvalues of $T$ satisfy $\lambda + \lambda^{*} + 1 = 0$ and since $T$ is unitary, the only possibilities are $\lambda =  \omega$ and $\lambda = \omega^{2}$. Let us now show that the unitaries $B_{0}$ and $B_{1}$ respect the block structure of $T$. We choose a basis in which $T$ reads
\begin{equation*}
T =
\left(
\begin{array}{cc}
\omega \mathbb{1} &\\
& \omega^{2} \mathbb{1}
\end{array}
\right)
\end{equation*}
and write $B_{0}$ in the same basis
\begin{equation*}
B_{0} =
\left(
\begin{array}{cc}
E_{0} & F_{0}\\
F_{1} & E_{1}
\end{array}
\right).
\end{equation*}
The requirement $B_{0}\hc = B_{0}^{2}$ implies
\begin{equation}
\label{eq:b0dagger=b0square}
\left(
\begin{array}{cc}
E_{0}\hc & F_{1}\hc\\
F_{0}\hc & E_{1}\hc
\end{array}
\right) =
\left(
\begin{array}{cc}
E_{0}^{2} + F_{0} F_{1} & E_{0} F_{0} + F_{0} E_{1}\\
F_{1} E_{0} + E_{1} F_{1} & F_{1} F_{0} + E_{1}^{2}
\end{array}
\right).
\end{equation}
Let
\begin{equation}
R := B_{0} T =
\left(
\begin{array}{cc}
\omega E_{0} & \omega^{2} F_{0}\\
\omega F_{1} & \omega^{2} E_{1}
\end{array}
\right).
\end{equation}
Since $R = B_{1}\hc B_{0} B_{1}$, we also have $R\hc = R^{2}$, which in the block form reads
\begin{equation}
\label{eq:rdagger=rsquare}
\left(
\begin{array}{cc}
\omega^{2} E_{0}\hc & \omega^{2} F_{1}\hc\\
\omega F_{0}\hc & \omega E_{1}\hc
\end{array}
\right) =
\left(
\begin{array}{cc}
\omega^{2} E_{0}^{2} + F_{0} F_{1} & E_{0} F_{0} + \omega F_{0} E_{1}\\
\omega^{2} F_{1} E_{0} + E_{1} F_{1} & F_{1} F_{0} + \omega E_{1}^{2}
\end{array}
\right).
\end{equation}
The top-left entries of Eqs.~\eqref{eq:b0dagger=b0square} and \eqref{eq:rdagger=rsquare}, i.e.
\begin{align*}
E_{0}\hc &= E_{0}^{2} + F_{0} F_{1},\\
\omega^{2} E_{0}\hc &= \omega^{2} E_{0}^{2} + F_{0} F_{1}
\end{align*}
immediately imply that $F_{0} F_{1} = 0$. Similarly, the bottom-right entries imply $F_{1} F_{0} = 0$. The bottom-left entries read
\begin{align*}
F_{0}\hc &= F_{1} E_{0} + E_{1} F_{1},\\
\omega F_{0}\hc &= \omega^{2} F_{1} E_{0} + E_{1} F_{1}.
\end{align*}
Right-multiplying both equations by $F_{0}$ yields
\begin{align*}
F_{0}\hc F_{0} &= F_{1} E_{0} F_{0},\\
\omega F_{0}\hc F_{0} &= \omega^{2} F_{1} E_{0} F_{0},
\end{align*}
which immediately implies that $F_{0}\hc F_{0} = 0$ and, therefore, $F_{0} = 0$. Similarly, by looking at the top-right entry we deduce that $F_{1} = 0$. Therefore, $B_{0}$ respects the block structure of $T$. Since the same argument applies to $B_{1}$, when solving the equation $T = B_{0}\hc B_{1}\hc B_{0} B_{1}$ it suffices to solve the two blocks separately. Fortunately, on each block the unitaries $B_{0}$ and $B_{1}$ satisfy a commutation relation covered by Proposition~\ref{prop:commutation-relation}, so we already have the solution. In particular, if we use the canonical observables specified in Eqs.~\eqref{eq:convenient-choice-1} and \eqref{eq:convenient-choice-2} the unitaries $B_{0}$ and $B_{1}$ in the block form are given by
\begin{equation*}
B_{0} =
\left(
\begin{array}{cc}
X \otimes \mathbb{1} &\\
& X \otimes \mathbb{1}
\end{array}
\right)
\nbox{and}
B_{1} =
\left(
\begin{array}{cc}
X^{2} Z \otimes \mathbb{1} &\\
& Z^{2} \otimes \mathbb{1}
\end{array}
\right).
\end{equation*}
The final step is to incorporate the block structure into the tensor product according to the equivalence
\begin{equation*}
( \amsbb{C}^{3} \otimes \amsbb{C}^{t_{1}} ) \oplus ( \amsbb{C}^{3} \otimes \amsbb{C}^{t_{2}} ) \equiv \amsbb{C}^{3} \otimes ( \amsbb{C}^{t_{1}} \oplus \amsbb{C}^{t_{2}} ) \equiv \amsbb{C}^{3} \otimes \amsbb{C}^{t_{1} + t_{2}},
\end{equation*}
which gives rise to the projectors $Q_{1}$ and $Q_{2}$.
\end{proof}
This proposition allows us to prove the results stated in the main text. The commutation relation $B_{2}\hc = - \omega \{ B_{0}, B_{1} \}$ implies that the anticommutator $\{ B_{0}, B_{1} \}$ is unitary, which allows us to determine the exact form of $B_{0}$ and $B_{1}$. Finally, the observable $B_{2}$ can be computed from the same relation.

\section{Quadratic Gauss sums}
\label{app:Gauss}

In this appendix we compute certain quadratic Gauss sums which we use in our considerations in Appendix \ref{app:fazy}.
\begin{obs}
Let $a$ and $b$ be two integers and let $\omega=\mathrm{exp}(2\pi \mathbbm{i}/d)$ with $d$ being a prime number. Then, 
\begin{equation}\label{GaussSum1}
\sum_{i=0}^{d-1}\omega^{a(i^2+bi)}=
\varepsilon_d\sqrt{d}\left(\frac{a}{d}\right)\left\{
\begin{array}{lc}
\omega^{-\frac{1}{4}ab^2},& b \equiv 0 \,\,\mathrm{mod}\,\,2\\[1ex]
\omega^{-\frac{1}{4}a(d-b)^2},& b \equiv 1 \,\,\mathrm{mod}\,\,2
\end{array}
\right.
\end{equation}
\end{obs}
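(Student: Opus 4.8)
The plan is to evaluate the quadratic Gauss sum
\begin{equation*}
S(a,b) := \sum_{i=0}^{d-1} \omega^{a(i^2+bi)}
\end{equation*}
by completing the square in the exponent and reducing it to the classical quadratic Gauss sum $g(a) := \sum_{i} \omega^{a i^2}$, whose value $\varepsilon_d \sqrt{d} \left( \frac{a}{d} \right)$ is the content of the standard Gauss sum evaluation that I will take as known. The key arithmetic input is that $d$ is an odd prime, so $2$ is invertible modulo $d$; this is what makes completing the square legitimate in the additive group $\amsbb{Z}_d$.

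\emph{Reducing to the standard sum.} First I would complete the square. Since $i^2 + bi = (i + b/2)^2 - b^2/4$ where $b/2$ denotes the inverse of $2$ times $b$ modulo $d$, and since $i \mapsto i + b/2$ is a bijection of $\amsbb{Z}_d$, a change of summation variable gives
\begin{equation*}
S(a,b) = \omega^{-a b^2/4} \sum_{i=0}^{d-1} \omega^{a i^2} = \omega^{-a b^2/4}\, \varepsilon_d \sqrt{d} \left( \frac{a}{d} \right).
\end{equation*}
So up to the prefactor $\varepsilon_d \sqrt{d} \left( \frac{a}{d} \right)$, the whole content is the value of $\omega^{-a b^2/4}$, and the genuine issue is interpreting the fraction $b^2/4$ as a well-defined \emph{integer} exponent modulo $d$ (since $\omega$ is a primitive $d$-th root of unity, the exponent only matters mod $d$). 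The two cases in the claim, $b \equiv 0$ versus $b \equiv 1 \pmod 2$, come precisely from writing this modular inverse as an explicit integer: when $b$ is even, $b^2/4 = (b/2)^2$ is already an integer and the exponent is $-a b^2/4$ as stated; when $b$ is odd, $b$ is not divisible by $2$ over the integers, so I replace $b$ by $d - b$, which is even (as $d$ is odd), and which satisfies $d - b \equiv -b \pmod d$, hence $(d-b)^2 \equiv b^2 \pmod d$. Thus $\omega^{-ab^2/4} = \omega^{-a(d-b)^2/4}$ with the exponent now a bona fide integer.

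\textbf{The main obstacle} is bookkeeping the exponent as an honest integer rather than a formal fraction. The expression $-\tfrac{1}{4} a b^2$ in the statement should be read modulo $d$, and the whole point of splitting into the parity cases is to guarantee that the quantity inside is divisible by $4$ over the integers so that the exponent is unambiguous. I would therefore be careful to verify, in the odd-$b$ case, that $4 \mid (d-b)^2$, which holds because $d - b$ is even; and in the even-$b$ case that $4 \mid b^2$ trivially. The one subtlety worth flagging is the edge case $a \equiv 0 \pmod d$, where $S(0,b) = d$ while the Legendre symbol $\left( \frac{0}{d} \right) = 0$ makes the right-hand side vanish; I would note that the formula as stated implicitly assumes $a \not\equiv 0 \pmod d$ (as is standard for Gauss sums), which is the only regime in which it is applied in Appendix~\ref{app:fazy}. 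Modulo this standard convention, the two-case formula follows immediately from the completed-square identity together with the quoted evaluation of $g(a)$.
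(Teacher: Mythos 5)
Your proof is correct and follows essentially the same route as the paper's: complete the square, reduce to the standard quadratic Gauss sum $\sum_i \omega^{ai^2} = \varepsilon_d\sqrt{d}\left(\frac{a}{d}\right)$ taken as known, and use the parity split (replacing $b$ by $d-b$ when $b$ is odd) solely to render the exponent $-\frac{1}{4}ab^2$ an honest integer---the only cosmetic difference being that you complete the square with $2^{-1}\bmod d$ abstractly and convert to integers at the end, whereas the paper manipulates integer exponents case by case and invokes shift-invariance of the sum. Your remark that the formula implicitly requires $a\not\equiv 0 \pmod d$ (otherwise the left-hand side equals $d$ while the right-hand side vanishes) is a valid point that the paper leaves unstated.
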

\begin{proof}Assuming first $b$ to be even, we have
the following chain of equalities
\begin{equation}
\sum_{i=0}^{d-1}\omega^{ai^2+abi}=\omega^{-\frac{1}{4}ab^2}\sum_{i=0}^{d-1}\omega^{a\left[i+\frac{b}{2}\right]^2}=\omega^{-\frac{1}{4}ab^2}\sum_{i=\frac{b}{2}}^{d-1+\frac{b}{2}}\omega^{ai^2}=\omega^{-\frac{1}{4}ab^2 }\sum_{i=0}^{d-1}\omega^{ai^2}=\varepsilon_d\sqrt{d}\left(\frac{a}{d}\right)\omega^{-\frac{1 }{4}ab^2},
\end{equation}
where to get the third expression we shifted the summation range by an integer $b/2$, while the third equality is a result of the fact that for prime $d$ this shifting does not change the value of the Gauss sum.

Let us then consider the case of odd $b$. We notice that although $b/2$ is not an integer, $(d-b)/2$ is due to the fact that $d$ is odd and a difference of two odd numbers is 
an even number. Moreover, $\omega^{-ndi}=1$, and therefore we can follow the same reasoning as above, which gives
\begin{equation}
\sum_{i=0}^{d-1}\omega^{a(i^2+bi)}
=\sum_{i=0}^{d-1}\omega^{ai^2}
\omega^{-a(d-b)i}=\omega^{-\frac{1}{4}a(d-b)^2}\sum_{i=0}^{d-1}\omega^{a\left(i-\frac{d-b}{2}\right)^2}=\varepsilon_d\sqrt{d}\left(\frac{a}{d}\right)\omega^{-\frac{a}{4}(d-b)^2}.
\end{equation}
\end{proof}

\begin{obs}Let $a$ be an even integer and $b=c/2$ for some
odd integer $c$ and let $\omega=\mathrm{exp}(2\pi \mathbbm{i}/d)$ with $d$ being a prime number. Then, the following identities hold true
\begin{equation}\label{GaussSum2}
\sum_{i=0}^{d-1}\omega^{a(i^2+bi)}=
\varepsilon_d\sqrt{d}\left(\frac{a}{d}\right)\left\{
\begin{array}{ll}
\omega^{-\frac{1}{4}ab'^2},& b' \equiv 0 \,\,\mathrm{mod}\,\,2\\[1ex]
\omega^{-\frac{1}{4}a(d-b')^2},& b' \equiv 1 \,\,\mathrm{mod}\,\,2,
\end{array}
\right.
\end{equation}
where $b'=b+d/2$.  
\end{obs}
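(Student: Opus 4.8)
The plan is to reduce the claim to the previous Observation [Eq.~\eqref{GaussSum1}] by shifting the half-integer $b$ to an integer and then invoking the already-established integer case. First I would check that the sum is even well defined: although $b=c/2$ is not an integer, the hypothesis that $a$ is even makes $ab=(a/2)c$ an integer, so each summand $\omega^{a(i^2+bi)}$ carries an integer exponent.

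The key step is a shift-invariance argument. Replacing $b$ by $b':=b+d/2$ changes each exponent by $a(d/2)\,i$, and since $a$ is even the quantity $a(d/2)=(a/2)d$ is an integer multiple of $d$. Hence $\omega^{a(d/2)i}=\big(\omega^{d}\big)^{(a/2)i}=1$ for every $i$, and the sum is unchanged:
\begin{equation*}
\sum_{i=0}^{d-1}\omega^{a(i^2+bi)}=\sum_{i=0}^{d-1}\omega^{a(i^2+b'i)}.
\end{equation*}
Next I would observe that $b'=(c+d)/2$ is in fact an \emph{integer}: by assumption $c$ is odd and $d$ is an odd prime, so $c+d$ is even. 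Consequently the right-hand side is an ordinary quadratic Gauss sum with integer parameters $a$ and $b'$, to which the first Observation of this appendix applies verbatim. Substituting $b'$ for $b$ in Eq.~\eqref{GaussSum1} produces precisely the right-hand side of \eqref{GaussSum2}, with the case split on the parity of $b'$ inherited directly from \eqref{GaussSum1}. This completes the argument.

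There is no serious obstacle here; the result is essentially a one-line reduction to the integer case. The only points that warrant attention are the two integrality verifications---that $ab$ is an integer, so the left-hand side is meaningful, and that $b'$ is an integer, so the earlier formula is applicable---together with the remark that it is exactly the evenness of $a$ that renders the $d/2$-shift harmless. Were $a$ odd, the shift would alter the exponent by a non-integral multiple of $d$ and the reduction would break down, which is why the evenness assumption on $a$ is indispensable in this case.
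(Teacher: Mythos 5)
Your proposal is correct and follows essentially the same route as the paper: both proofs exploit that $a$ even makes the shift $b\mapsto b'=b+d/2$ leave every summand unchanged (since $\omega^{a i d/2}=1$), note that $b'$ is an integer because $d$ is odd, and then apply Eq.~\eqref{GaussSum1} verbatim to the shifted sum. Your added remarks---that the left-hand side is well defined because $ab$ is an integer, and that evenness of $a$ is indispensable for the shift argument---are correct refinements of the same argument, not a different method.
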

\begin{proof}We could follow the above reasoning, however, 
$b$ is not an integer. To overcome this difficulty, we exploit the fact that $d$ is odd and therefore $b'\equiv b+d/2$ is an integer. Moreover, due to the fact that $a$ is even $\omega^{aid/2}=1$ for any $i$, and consequently, 
\begin{equation}
\sum_{i=0}^{d-1}\omega^{a(i^2+bi)}=\sum_{i=0}^{d-1}\omega^{a\left[i^2+\left(b+\frac{d}{2}\right)i\right]}=\sum_{i=0}^{d-1}\omega^{a(i^2+b'i)}.
\end{equation}
We then obtain (\ref{GaussSum2}) by applying Eq. (\ref{GaussSum1}) to the last term in the above expression, which completes the proof.
\end{proof}

\section{Determining the phases $\lambda_n$}
\label{app:fazy}

Here we will show how the phases $\lambda_n$ appearing in the Bell operator (\ref{eq:Wd-definition}) can be fixed so that the maximal quantum violation of the corresponding Bell inequality is achieved by the maximally entangled state (\ref{eq:maxent}); in other words, we will justify the choice of phases defined in Eqs. (\ref{eq:phases1}) and (\ref{eq:phases12}).
We will also justify the choice of Alice's observables made in the main text. 

To make this section self-contained let us recall the definition of the Bell operator
stated already in Eq. (\ref{eq:Wd-definition}):
\begin{equation}
\label{eq:Wd-definition2}
W_{d} = \frac{ 1 }{d^{2} \sqrt{d}} \sum_{n = 0}^{d - 1} \sum_{j=0}^{d-1} A_{j}^{(n)} \otimes C_{j}^{(n)},
\end{equation}
where
\begin{equation}
\label{appeq:Cjn-definition}
C_j^{(n)}:=\frac{\lambda_n}{\sqrt{d}}\sum_{k=0}^{d-1}\omega^{njk}B_k^{(n)}.
\end{equation}

We begin by deriving the optimal observables of Alice. For this purpose, let us first show 
that for a suitable choice of $\lambda_1$, the operators 
$C_j^{(1)}\equiv C_j$, defined as
\begin{equation}
C_j=\frac{\lambda_1}{\sqrt{d}}\sum_{k=0}^{d-1}\omega^{jk}B_k,
\end{equation}
are proper observables in our scenario, that is, they are unitary and 
have eigenvalues $\omega^{i}$ with $i=0,\ldots,d-1$. In fact, 
it is not difficult to see, with the aid of formula (\ref{GaussSum1}), that for any choice of the phase $\lambda_1$, the operators $C_j$ are indeed unitary. Let us then determine the value of $\lambda_1$ for which the second condition is satisfied too. To this aim, we demand that 
$C_j^d=\mathbbm{1}$
for any $j$, which is equivalent to say that $C_j$ have the required spectrum.

Before exploiting the above condition, let us first obtain a simpler matrix form 
of $C_j$. From (\ref{appeq:Cjn-definition}) and the fact that $B_k=\omega^{k(k+1)}XZ^k$ we have 
\begin{eqnarray}\label{eq:Cj}
C_j&=&
\frac{\lambda_1}{\sqrt{d}}\sum_{k=0}^{d-1}\omega^{jk}\omega^{k(k+1)}XZ^k=
\frac{\lambda_1}{\sqrt{d}}\sum_{i=0}^{d-1}\left[\sum_{k=0}^{d-1}\omega^{k^2+k(j+1+i)}\right]\ket{i+1}\!\bra{i}
\equiv
\frac{\lambda_1}{\sqrt{d}}\sum_{i=0}^{d-1}G(i,j,d)\ket{i+1}\!\bra{i},
\end{eqnarray}
where to obtain second equality we have used the explicit matrix form of the $Z$ operator and we have denoted
\begin{equation}
G(i,j,d)=\sum_{k=0}^{d-1}\omega^{k^2+k(i+j+1)}.
\end{equation}
The last sum has already been computed in Appendix
\ref{app:Gauss} and its closed formula is given in 
Eq. (\ref{GaussSum1}).

Now, taking the $d$th power of $C_j$ we obtain
\begin{equation}
C_j^d=\frac{\lambda_1^d}{\sqrt{d}^d}\sum_{i=0}^{d-1}G(i,j,d)\cdot G(i+1,j,d)\cdot\ldots\cdot G(i+d-1,j,d)\ket{i}\!\bra{i}.
\end{equation}
After quite tedious algebra one finds, by virtue of Eq. (\ref{GaussSum1}), that the above product of Gauss sums amounts to 
$G(i,j,d)\cdot\ldots\cdot G(i+d-1,j,d)=\varepsilon_d^d\,d^{d/2}\omega^{-d(d^2-1)/12}$ and thus $C_j^d=\mathbbm{1}$ if 
%
$\lambda_1=\omega^{(d^2-1)/12}/\varepsilon_d$
which agrees with Eqs. (\ref{eq:phases1}) and (\ref{eq:phases12}).

Having established that $C_j$ are proper observables in Bell scenario, we define Alice's
observables as $A_j=C_j^{*}$, with the main reason being the fact that in such a case
$A_j\otimes C_j$ is a stabilizing operator of $\ket{\Phi_{AB}}$ for any $j$, that is, 
\begin{equation}\label{stab1}
A_j\otimes C_j\ket{\Phi_{AB}}=\ket{\Phi_{AB}}.
\end{equation}

Let us now determine the phases $\lambda_n$ for $n>1$. 
To this aim we impose the following condition 
\begin{equation}\label{eq:condition}
A_j^{(n)}\otimes C_j^{(n)}\ket{\Phi_{AB}}=\ket{\Phi_{AB}}
\end{equation}
for any $j$ and $n$, which we will use to determine the explicit values of $\lambda_n$. Owing to the well-known property of the maximally entangled state that $X\otimes Y\ket{\Phi_{AB}}=\mathbbm{1}\otimes YX^T\ket{\psi_{AB}}$ for any pair of matrices $X,Y$, the condition (\ref{eq:condition}) can be stated equivalently as
\begin{equation}\label{condition2}
C_j^{(n)}=A_j^{(n)*}=[A_j^{*}]^n=C_j^n.
\end{equation}
In order to exploit this condition, we need to 
find the matrix form of each of its sides. We begin with $C_j^{(n)}$. Using the explicit form of $B_k$ and the fact that 
\begin{equation}
(XZ^k)^n=\sum_{l=0}^{d-1}\omega^{k\frac{n(n-1)}{2}}\omega^{nkl}\ket{l+n}\!\bra{l},
\end{equation}
we can write $C_j^{(n)}$ as 
\begin{equation}
C_j^{(n)}=\frac{\lambda_n}{\sqrt{d}}\sum_{l=0}^{d-1}\left[\sum_{k=0}^{d-1}\omega^{nk^2}\omega^{nk\left(j+l+\frac{n+1}{2}\right)}\right]\ket{l+n}\!\bra{l}
\end{equation}
Using Eqs. (\ref{GaussSum1}) and (\ref{GaussSum2}), 
we finally arrive at 
\begin{equation}\label{Cjnform1}
C_j^{(n)}=\lambda_{n}\,\varepsilon_d\left(\frac{n}{d}\right)\sum_{l=0}^{d-1}|l+n\rangle\!\langle l|\left\{
\begin{array}{lc}
\omega^{-\frac{n}{4}\left(j+l+\frac{n+1}{2}\right)^2}, & j+l+\frac{n+1}{2}\equiv 0\,\,\mathrm{mod}\,\,2\\[1ex]
\omega^{-\frac{n}{4}\left(j+l+\frac{n+1}{2}-d\right)^2}, & j+l+\frac{n+1}{2}\equiv 1\,\,\mathrm{mod}\,\,2,
\end{array}
\right.
\end{equation}
for odd $n$ and 
\begin{equation}\label{Cjnform2}
C_j^{(n)}=\lambda_{n}\,\varepsilon_d\left(\frac{n}{d}\right)\sum_{l=0}^{d-1}|l+n\rangle\!\langle l|\left\{
\begin{array}{cc}
\omega^{-\frac{n}{4}\left(j+l+\frac{n+1+d}{2}\right)^2}, & j+l+\frac{n+1+d}{2}\equiv 0\,\,\mathrm{mod}\,\,2\\[1ex]
\omega^{-\frac{n}{4}\left(j+l+\frac{n+1-d}{2}\right)^2}, & j+l+\frac{n+1+d}{2}\equiv 1\,\,\mathrm{mod}\,\,2.
\end{array}
\right.
\end{equation}
for even $n$.

Let us now move on to $C_j^n$. Using Eq. (\ref{eq:Cj}) we can write
\begin{eqnarray}
C_j^n&=&\frac{\lambda_1^n}{\sqrt{d}^n}\sum_{i=0}^{d-1}G(i,j,d)\cdot G(i+1,j,d)\cdot\ldots\cdot G(i+n-1,j,d)\ket{i+n}\!\bra{i}\nonumber\\
&=&\frac{\lambda_1^n}{\sqrt{d}^n}\left[\sum_{k=0}^{(d-1)/2}G(2k,j,d)\cdot G(2k+1,j,d)\cdot\ldots\cdot G(2k+n-1,j,d)\ket{2k+n}\!\bra{2k}\right.\nonumber\\
&&\hspace{1.3cm}+\left.\sum_{k=0}^{(d-3)/2}G(2k+1,j,d)\cdot G(2k+2,j,d)\cdot\ldots\cdot G(2k+1+n-1,j,d)\ket{2k+1+n}\!\bra{2k+1}\right],
\end{eqnarray}
where to facilitate computation of the above products of Gauss sums we have split the sum into two sums, one over even and one over odd $i$'s. Then, to compute these products we use Eqs. 
(\ref{GaussSum1}) and (\ref{GaussSum2}), dividing our analysis into four cases: 

\begin{itemize}

\item odd $n$, odd $j$,

\begin{equation}\label{OddOdd1}
G(2k,j,d)\ldots G(2k+n-1,j,d)=\varepsilon_d^n\, d^{n/2}\,
\omega^{-\frac{1}{24}\{3d^2(n-1)+n-3d(n-1)(1+2j+4k+n)+n[6(j+2k)(1+j+2k)+3(1+2j+4k)n+2n^2]\}}
\end{equation}
\begin{equation}\label{OddOdd2}
G(2k+1,j,d)\ldots G(2k+1+n-1,j,d)=\varepsilon_d^n\, d^{n/2}\,
\omega^{-\frac{1}{24} \{13 n + 3 d^2 (1 + n) - 3 d (1 + n) (3 + 2 j + 4 k + n) + 
   n [6 (j + 2 k) (3 + j + 2 k) + 3 (3 + 2 j + 4 k) n + 2 n^2]\}}\\
\end{equation}

\item odd $n$, even $j$,

\begin{equation}\label{OddEven1}
G(2k,j,d)\ldots G(2k+n-1,j,d)=\varepsilon_d^n\, d^{n/2}\,
\omega^{-\frac{1}{24}\{n+3d^2(n+1)-3d(n+1)(1+2j+4k+n)+n[6(j+2k)(j+1+2k)+3(1+2j+4k)n+2n^2]\}}
\end{equation}
\begin{equation}\label{OddEven2}
G(2k+1,j,d)\ldots G(2k+1+n-1,j,d)=\varepsilon_d^n\, d^{n/2}\,
\omega^{-\frac{1}{24}\{3d^2(n-1)-3d(n-1)(3+4k+n)+n[13+24k^2+12k(3+n)+n(9+2n)]\}}\\
\end{equation}

\item even $n$, odd $j$,

\begin{equation}\label{EvenOdd1}
G(2k,j,d)\ldots G(2k+n-1,j,d)=\varepsilon_d^n\, d^{n/2}\,
\omega^{-\frac{n}{24}\{1+3d^2+6j+12k+6(j+2k)^2+3n+6(j+2k)n+2n^2-3d(2+2j+4k+n)\}}
\end{equation}
\begin{equation}\label{EvenOdd2}
G(2k+1,j,d)\ldots G(2k+1+n-1,j,d)=\varepsilon_d^n\, d^{n/2}\,
\omega^{-\frac{n}{24}\{13+3d^2+18j+36k+6(j+2k)^2+9n+6(j+2k)n+2n^2-3d(2+2j+4k+n)\}}\\
\end{equation}

\item Even $n$, even $j$,

\begin{equation}\label{EvenEven1}
G(2k,j,d)\ldots G(2k+n-1,j,d)=\varepsilon_d^n\, d^{n/2}\,
\omega^{-\frac{n}{24}\{1+3d^2+6j+12k+6(j+2k)^2+3n+6(j+2k)n+2n^2-3d(2j+4k+n)\}}
\end{equation}
\begin{equation}\label{EvenEven2}
G(2k+1,j,d)\ldots G(2k+1+n-1,j,d)=\varepsilon_d^n\, d^{n/2}\,
\omega^{-\frac{n}{24}\{13+3d^2+18j+36k+6(j+2k)^2+9n+6(j+2k)n+2n^2-3d(4+2j+4k+n)\}}\\
\end{equation}

\end{itemize}

Having determined both sides of Eq. (\ref{condition2}), we can now compare them. 
Traditionally, we will consider the cases of odd and even $n$ separately. \\

\noindent\textbf{Odd $n$.} Let us assume that $n\mod 4\equiv 1$, i.e., $n=4m+1$ with $m\in\mathbbm{N}$. Let also $j$ be odd. Then, to determine $\lambda_n$ we compare Eq. (\ref{OddOdd1}) and the first formula in Eq. (\ref{Cjnform1}) with $l=2k$, which, after some algebra, gives us
\begin{equation}\label{lambda1}
\lambda_{n}=\frac{1}{\varepsilon_d \left(\frac{n}{d}\right)}\omega^{-\frac{1}{48}[n(n^2+3)+2d^2(-5n+3)]}.
\end{equation}
We then, check that if we compare Eq. (\ref{OddOdd2}) with the second formula in Eq. (\ref{Cjnform1}) with $l=2k+1$, we obtain exactly the same phases.

On the other hand, if we assume that $n\mod 4\equiv 3$, i.e., $n=4m+3$ with $m\in\mathbbm{N}$
and also that $j$ is odd, we compare Eq. (\ref{OddOdd2}) with the first formula in 
Eq. (\ref{Cjnform1}) with $l=2k+1$, which leads us to
\begin{equation}\label{lambda2}
\lambda_{n}=\frac{1}{\varepsilon_d \left(\frac{n}{d}\right)}\omega^{-\frac{1}{48}[n(n^2+3)+2d^2(n+3)]}.
\end{equation}
The same formula is obtained when comparing Eq. (\ref{OddOdd1}) with the second formula in 
Eq. (\ref{Cjnform1}) with $l=2k$.\\

On can check that exactly the same formulas are obtained in the case of even $j$.\\

\noindent\textbf{Even $n$.} Assume first that $j$ is odd and $n\mod 4\equiv 0$, meaning that $n=4m$ with $m\in\mathbbm{N}$. Let us also assume that $d=4p+1$ with $p\in\mathbbm{N}$. Then, comparison of Eq. (\ref{EvenOdd1}) with the first formula in Eq. (\ref{Cjnform2}) for $l=2k$ as well as Eq. (\ref{EvenOdd2}) with the second formula in Eq. (\ref{Cjnform2}) for $l=2k+1$, gives
\begin{equation}\label{lambda3}
\lambda_{n}=\frac{1}{\varepsilon_d \left(\frac{n}{d}\right)}\omega^{-\frac{n}{48}[n^2-d(d-6)+3]}.
\end{equation}
If we then assume that $d=4p+3$ with $p\in\mathbbm{N}$ and compare Eq. (\ref{EvenOdd1})
with the second formula in (\ref{Cjnform2}) with $l=2k$, we obtain
\begin{equation}\label{lambda4}
\lambda_{n}=\frac{1}{\varepsilon_d\left(\frac{n}{d}\right)}\omega^{-\frac{n}{48}[n^2-d(d+6)+3]}.
\end{equation}
Comparison of Eq. (\ref{EvenOdd2}) with 
the first formula in Eq. (\ref{Cjnform2}) with $l=2k+1$, leads to the same formula.

Let us finally consider the case of $n=4m+2$ for any $m\in\mathbbm{N}$. Then, 
as before we consider two cases $d=4p+1$ and $d=4p+3$ with $p\in\mathbbm{N}$.
In the first case, we use Eq. (\ref{EvenOdd1}) and the second formula in Eq. (\ref{Cjnform2})
to get (\ref{lambda4}). As the same time, the same formula for $\lambda_n$ is obtained from 
Eq. (\ref{EvenOdd2}) and the first formula in Eq. (\ref{Cjnform2}) with $l=2k$. 

Then, in the second case, i.e., $d=4p+3$, we exploit
Eq. (\ref{EvenOdd1}) with the first formula in Eq. (\ref{Cjnform2}), which leads us to  
$\lambda_n$ given in Eq. (\ref{lambda3}). At the same time, comparison of Eq. (\ref{EvenOdd2}) with the second formula in Eq. (\ref{Cjnform2}) with $l=2k+1$ gives exactly the same formula.

One then checks that the same phases are obtained under the assumption that $j$ is even.\\

\noindent\textbf{Summary.} To summarize, depending on the value of $n$ we use the following $\lambda$'s:
\begin{itemize}
\item $n=4m$. We use Eq. (\ref{lambda3}) for $d=4p+3$ and Eq. (\ref{lambda4}) for
$d=4p+1$.

\item $n=4m+1$. We use Eq. (\ref{lambda2}) for any $d$.

\item $n=4m+2$. We use Eq. (\ref{lambda3}) for $d=4p+1$ and Eq. (\ref{lambda4}) for $d=4p+3$.

\item $n=4m+3$. We use Eq. (\ref{lambda1}) irrespectively of the dimension. 
\end{itemize}
\end{widetext}
%

\begin{thebibliography}{AFDF{\etalchar{+}}18}

\bibitem[ABB{\etalchar{+}}17]{andersson17a}
O.~Andersson, P.~Badzi\k{a}g, I.~Bengtsson, I.~Dumitru, and A.~Cabello.
\newblock {Self-testing properties of Gisin's elegant Bell inequality}.
\newblock {\em Phys. Rev. A}, 96: 032119, 2017.
\newblock \\
  \texttt{DOI:\,\href{http://dx.doi.org/10.1103/PhysRevA.96.032119}{10.1103/PhysRevA.96.032119}}.

\bibitem[ABG{\etalchar{+}}07]{acin07a}
A.~Ac{\'i}n, N.~Brunner, N.~Gisin, S.~Massar, S.~Pironio, and V.~Scarani.
\newblock {Device-independent security of quantum cryptography against
  collective attacks}.
\newblock {\em Phys. Rev. Lett.}, 98: 230501, 2007.
\newblock \\
  \texttt{DOI:\,\href{http://dx.doi.org/10.1103/PhysRevLett.98.230501}{10.1103/PhysRevLett.98.230501}}.

\bibitem[AFDF{\etalchar{+}}18]{arnonfriedman18a}
R.~Arnon-Friedman, F.~Dupuis, O.~Fawzi, R.~Renner, and T.~Vidick.
\newblock {Practical device-independent quantum cryptography via entropy
  accumulation}.
\newblock {\em Nat. Commun.}, 9: 459, 2018.
\newblock \\
  \texttt{DOI:\,\href{http://dx.doi.org/10.1038/s41467-017-02307-4}{10.1038/s41467-017-02307-4}}.

\bibitem[AGM06]{acin06a}
A.~Ac{\'i}n, N.~Gisin, and L.~Masanes.
\newblock {From Bell's theorem to secure quantum key distribution}.
\newblock {\em Phys. Rev. Lett.}, 97: 120405, 2006.
\newblock \\
  \texttt{DOI:\,\href{http://dx.doi.org/10.1103/PhysRevLett.97.120405}{10.1103/PhysRevLett.97.120405}}.

\bibitem[AMP12]{acin12a}
A.~Ac{\'i}n, S.~Massar, and S.~Pironio.
\newblock {Randomness versus nonlocality and entanglement}.
\newblock {\em Phys. Rev. Lett.}, 108: 100402, 2012.
\newblock \\
  \texttt{DOI:\,\href{http://dx.doi.org/10.1103/PhysRevLett.108.100402}{10.1103/PhysRevLett.108.100402}}.

\bibitem[BBRV02]{bandyopadhyay02a}
S.~Bandyopadhyay, P.~O. Boykin, V.~Roychowdhury, and F.~Vatan.
\newblock {A new proof for the existence of mutually unbiased bases}.
\newblock {\em Algorithmica}, 34: 512, 2002.
\newblock \\
  \texttt{DOI:\,\href{http://dx.doi.org/10.1007/s00453-002-0980-7}{10.1007/s00453-002-0980-7}}.

\bibitem[BCP{\etalchar{+}}14]{brunner14a}
N.~Brunner, D.~Cavalcanti, S.~Pironio, V.~Scarani, and S.~Wehner.
\newblock {Bell nonlocality}.
\newblock {\em Rev. Mod. Phys.}, 86: 419, 2014.
\newblock \\
  \texttt{DOI:\,\href{http://dx.doi.org/10.1103/RevModPhys.86.419}{10.1103/RevModPhys.86.419}}.

\bibitem[Bel64]{bell64a}
J.~S. Bell.
\newblock {On the Einstein-Podolsky-Rosen paradox}.
\newblock {\em Physics}, 1: 195, 1964.

\bibitem[BHK05]{barrett05b}
J.~Barrett, L.~Hardy, and A.~Kent.
\newblock {No signaling and quantum key distribution}.
\newblock {\em Phys. Rev. Lett.}, 95: 010503, 2005.
\newblock \\
  \texttt{DOI:\,\href{http://dx.doi.org/10.1103/PhysRevLett.95.010503}{10.1103/PhysRevLett.95.010503}}.

\bibitem[BKP06]{barrett06a}
J.~Barrett, A.~Kent, and S.~Pironio.
\newblock {Maximally nonlocal and monogamous quantum correlations}.
\newblock {\em Phys. Rev. Lett.}, 97: 170409, 2006.
\newblock \\
  \texttt{DOI:\,\href{http://dx.doi.org/10.1103/PhysRevLett.97.170409}{10.1103/PhysRevLett.97.170409}}.

\bibitem[BLM{\etalchar{+}}09]{bardyn09a}
C.-E. Bardyn, T.~C.~H. Liew, S.~Massar, M.~McKague, and V.~Scarani.
\newblock {Device independent state estimation based on Bell's inequalities}.
\newblock {\em Phys. Rev. A}, 80: 062327, 2009.
\newblock \\
  \texttt{DOI:\,\href{http://dx.doi.org/10.1103/PhysRevA.80.062327}{10.1103/PhysRevA.80.062327}}.

\bibitem[BM05]{buhrman05a}
H.~Buhrman and S.~Massar.
\newblock {Causality and Tsirelson's bounds}.
\newblock {\em Phys. Rev. A}, 72: 052103, 2005.
\newblock \\
  \texttt{DOI:\,\href{http://dx.doi.org/10.1103/PhysRevA.72.052103}{10.1103/PhysRevA.72.052103}}.

\bibitem[BP15]{bamps15a}
C.~Bamps and S.~Pironio.
\newblock {Sum-of-squares decompositions for a family of
  Clauser-Horne-Shimony-Holt-like inequalities and their application to
  self-testing}.
\newblock {\em Phys. Rev. A}, 91: 052111, 2015.
\newblock \\
  \texttt{DOI:\,\href{http://dx.doi.org/10.1103/PhysRevA.91.052111}{10.1103/PhysRevA.91.052111}}.

\bibitem[BPA{\etalchar{+}}08]{brunner08a}
N.~Brunner, S.~Pironio, A.~Ac{\'i}n, N.~Gisin, A.~A. M{\'e}thot, and
  V.~Scarani.
\newblock {Testing the dimension of Hilbert spaces}.
\newblock {\em Phys. Rev. Lett.}, 100: 210503, 2008.
\newblock \\
  \texttt{DOI:\,\href{http://dx.doi.org/10.1103/PhysRevLett.100.210503}{10.1103/PhysRevLett.100.210503}}.

\bibitem[BPPP14]{bouda14a}
J.~Bouda, M.~Paw{\l}owski, M.~Pivoluska, and M.~Plesch.
\newblock {Device-independent randomness extraction from an arbitrarily weak
  min-entropy source}.
\newblock {\em Phys. Rev. A}, 90: 032313, 2014.
\newblock \\
  \texttt{DOI:\,\href{http://dx.doi.org/10.1103/PhysRevA.90.032313}{10.1103/PhysRevA.90.032313}}.

\bibitem[BS15]{bavarian15a}
M.~Bavarian and P.~W. Shor.
\newblock {Information causality, Szemer{\'e}di-Trotter and algebraic variants
  of CHSH}.
\newblock {\em Proc. Conference on Innovations in Theoretical Computer
  Science}, 2015.
\newblock \\
  \texttt{DOI:\,\href{http://dx.doi.org/10.1145/2688073.2688112}{10.1145/2688073.2688112}}.

\bibitem[CBLC16]{chen16a}
S.-L. Chen, C.~Budroni, Y.-C. Liang, and Y.-N. Chen.
\newblock {Natural framework for device-independent quantification of quantum
  steerability, measurement incompatibility, and self-testing}.
\newblock {\em Phys. Rev. Lett.}, 116: 240401, 2016.
\newblock \\
  \texttt{DOI:\,\href{http://dx.doi.org/10.1103/PhysRevLett.116.240401}{10.1103/PhysRevLett.116.240401}}.

\bibitem[CGL{\etalchar{+}}02]{collins02a}
D.~Collins, N.~Gisin, N.~Linden, S.~Massar, and S.~Popescu.
\newblock {Bell inequalities for arbitrarily high-dimensional systems}.
\newblock {\em Phys. Rev. Lett.}, 88: 040404, 2002.
\newblock \\
  \texttt{DOI:\,\href{http://dx.doi.org/10.1103/PhysRevLett.88.040404}{10.1103/PhysRevLett.88.040404}}.

\bibitem[CGS17]{coladangelo17a}
A.~Coladangelo, K.~T. Goh, and V.~Scarani.
\newblock {All pure bipartite entangled states can be self-tested}.
\newblock {\em Nat. Commun.}, 8: 15485, 2017.
\newblock \\
  \texttt{DOI:\,\href{http://dx.doi.org/10.1038/ncomms15485}{10.1038/ncomms15485}}.

\bibitem[CHSH69]{clauser69a}
J.~F. Clauser, M.~A. Horne, A.~Shimony, and R.~A. Holt.
\newblock {Proposed experiment to test local hidden-variable theories}.
\newblock {\em Phys. Rev. Lett.}, 23: 880, 1969.
\newblock \\
  \texttt{DOI:\,\href{http://dx.doi.org/10.1103/PhysRevLett.23.880}{10.1103/PhysRevLett.23.880}}.

\bibitem[CK11]{colbeck11a}
R.~Colbeck and A.~Kent.
\newblock {Private randomness expansion with untrusted devices}.
\newblock {\em J. Phys. A: Math. Theor.}, 44: 095305, 2011.
\newblock \\
  \texttt{DOI:\,\href{http://dx.doi.org/10.1088/1751-8113/44/9/095305}{10.1088/1751-8113/44/9/095305}}.

\bibitem[Col06]{colbeck06a}
R.~Colbeck.
\newblock {\em {Quantum and relativistic protocols for secure multi-party
  computation}}.
\newblock PhD thesis, University of Cambridge, 2006.

\bibitem[Col18]{coladangelo18a}
A.~Coladangelo.
\newblock {Generalization of the Clauser-Horne-Shimony-Holt inequality
  self-testing maximally entangled states of any local dimension}.
\newblock {\em Phys. Rev. A}, 98: 052115, 2018.
\newblock \\
  \texttt{DOI:\,\href{http://dx.doi.org/10.1103/PhysRevA.98.052115}{10.1103/PhysRevA.98.052115}}.

\bibitem[CS16]{cavalcanti16a}
D.~Cavalcanti and P.~Skrzypczyk.
\newblock {Quantitative relations between measurement incompatibility, quantum
  steering, and nonlocality}.
\newblock {\em Phys. Rev. A}, 93: 052112, 2016.
\newblock \\
  \texttt{DOI:\,\href{http://dx.doi.org/10.1103/PhysRevA.93.052112}{10.1103/PhysRevA.93.052112}}.

\bibitem[CS17]{coladangelo17c}
A.~Coladangelo and J.~Stark.
\newblock {Robust self-testing for linear constraint system games}.
\newblock 2017.

\bibitem[CS18]{coladangelo18b}
A.~Coladangelo and J.~Stark.
\newblock {Unconditional separation of finite and infinite-dimensional quantum
  correlations}.
\newblock 2018.

\bibitem[DPP17]{dykema17a}
K.~Dykema, V.~I. Paulsen, and J.~Prakash.
\newblock {Non-closure of the set of quantum correlations via graphs}.
\newblock 2017.

\bibitem[dV15]{devicente15a}
J.~I. de~Vicente.
\newblock {Simple conditions constraining the set of quantum correlations}.
\newblock {\em Phys. Rev. A}, 92: 032103, 2015.
\newblock \\
  \texttt{DOI:\,\href{http://dx.doi.org/10.1103/PhysRevA.92.032103}{10.1103/PhysRevA.92.032103}}.

\bibitem[EPR35]{einstein35a}
A.~Einstein, B.~Podolsky, and N.~Rosen.
\newblock {Can quantum-mechanical description of physical reality be considered
  complete?}
\newblock {\em Phys. Rev.}, 47: 777, 1935.
\newblock \\
  \texttt{DOI:\,\href{http://dx.doi.org/10.1103/PhysRev.47.777}{10.1103/PhysRev.47.777}}.

\bibitem[GKW{\etalchar{+}}18]{goh18a}
K.~T. Goh, J.~Kaniewski, E.~Wolfe, T.~V{\'e}rtesi, X.~Wu, Y.~Cai, Y.-C. Liang,
  and V.~Scarani.
\newblock {Geometry of the set of quantum correlations}.
\newblock {\em Phys. Rev. A}, 97: 022104, 2018.
\newblock \\
  \texttt{DOI:\,\href{http://dx.doi.org/10.1103/PhysRevA.97.022104}{10.1103/PhysRevA.97.022104}}.

\bibitem[JLL{\etalchar{+}}08]{ji08a}
S.-W. Ji, J.~Lee, J.~Lim, K.~Nagata, and H.-W. Lee.
\newblock {Multisetting Bell inequality for qudits}.
\newblock {\em Phys. Rev. A}, 78: 052103, 2008.
\newblock \\
  \texttt{DOI:\,\href{http://dx.doi.org/10.1103/PhysRevA.78.052103}{10.1103/PhysRevA.78.052103}}.

\bibitem[Kan17]{kaniewski17a}
J.~Kaniewski.
\newblock {Self-testing of binary observables based on commutation}.
\newblock {\em Phys. Rev. A}, 95: 062323, 2017.
\newblock \\
  \texttt{DOI:\,\href{http://dx.doi.org/10.1103/PhysRevA.95.062323}{10.1103/PhysRevA.95.062323}}.

\bibitem[KW16]{kaniewski16a}
J.~Kaniewski and S.~Wehner.
\newblock {Device-independent two-party cryptography secure against sequential
  attacks}.
\newblock {\em New J. Phys.}, 18: 055004, 2016.
\newblock \\
  \texttt{DOI:\,\href{http://dx.doi.org/10.1088/1367-2630/18/5/055004}{10.1088/1367-2630/18/5/055004}}.

\bibitem[LLD09]{liang09a}
Y.-C. Liang, C.-W. Lim, and D.-L. Deng.
\newblock {Reexamination of a multisetting Bell inequality for qudits}.
\newblock {\em Phys. Rev. A}, 80: 052116, 2009.
\newblock \\
  \texttt{DOI:\,\href{http://dx.doi.org/10.1103/PhysRevA.80.052116}{10.1103/PhysRevA.80.052116}}.

\bibitem[LRY{\etalchar{+}}10]{lim10a}
J.~Lim, J.~Ryu, S.~Yoo, C.~Lee, J.~Bang, and J.~Lee.
\newblock {Genuinely high-dimensional nonlocality optimized by complementary
  measurements}.
\newblock {\em New J. Phys.}, 12: 103012, 2010.
\newblock \\
  \texttt{DOI:\,\href{http://dx.doi.org/10.1088/1367-2630/12/10/103012}{10.1088/1367-2630/12/10/103012}}.

\bibitem[MBL{\etalchar{+}}13]{moroder13a}
T.~Moroder, J.-D. Bancal, Y.-C. Liang, M.~Hofmann, and O.~G{\"u}hne.
\newblock {Device-independent entanglement quantification and related
  applications}.
\newblock {\em Phys. Rev. Lett.}, 111: 030501, 2013.
\newblock \\
  \texttt{DOI:\,\href{http://dx.doi.org/10.1103/PhysRevLett.111.030501}{10.1103/PhysRevLett.111.030501}}.

\bibitem[McK14]{mckague14a}
M.~McKague.
\newblock {Self-testing graph states}.
\newblock {\em Theory of Quantum Computation, Communication, and Cryptography.
  TQC 2011. Lecture Notes in Computer Science}, 6745: 104, 2014.
\newblock \\
  \texttt{DOI:\,\href{http://dx.doi.org/10.1007/978-3-642-54429-3\_7}{10.1007/978-3-642-54429-3\_7}}.

\bibitem[MM11]{mckague11a}
M.~McKague and M.~Mosca.
\newblock {Generalized self-testing and the security of the 6-state protocol}.
\newblock {\em Theory of Quantum Computation, Communication, and Cryptography.
  TQC 2010. Lecture Notes in Computer Science}, 6519: 113, 2011.
\newblock \\
  \texttt{DOI:\,\href{http://dx.doi.org/10.1007/978-3-642-18073-6\_10}{10.1007/978-3-642-18073-6\_10}}.

\bibitem[MS16]{miller16b}
C.~A. Miller and Y.~Shi.
\newblock {Robust protocols for securely expanding randomness and distributing
  keys using untrusted quantum devices}.
\newblock {\em J. ACM}, 63: 33, 2016.
\newblock \\
  \texttt{DOI:\,\href{http://dx.doi.org/10.1145/2885493}{10.1145/2885493}}.

\bibitem[MY98]{mayers98a}
D.~Mayers and A.~Yao.
\newblock {Quantum cryptography with imperfect apparatus}.
\newblock {\em Proceedings 39th Annual Symposium on Foundations of Computer
  Science}, 1998.
\newblock \\
  \texttt{DOI:\,\href{http://dx.doi.org/10.1109/SFCS.1998.743501}{10.1109/SFCS.1998.743501}}.

\bibitem[MY04]{mayers04a}
D.~Mayers and A.~Yao.
\newblock {Self testing quantum apparatus}.
\newblock {\em Quant. Inf. Comp.}, 4: 273, 2004.

\bibitem[MYS12]{mckague12a}
M.~McKague, T.~H. Yang, and V.~Scarani.
\newblock {Robust self-testing of the singlet}.
\newblock {\em J. Phys. A: Math. Theor.}, 45: 455304, 2012.
\newblock \\
  \texttt{DOI:\,\href{http://dx.doi.org/10.1088/1751-8113/45/45/455304}{10.1088/1751-8113/45/45/455304}}.

\bibitem[PAM{\etalchar{+}}10]{pironio10a}
S.~Pironio, A.~Ac{\'i}n, S.~Massar, A.~Boyer de~la Giroday, D.~N. Matsukevich,
  P.~Maunz, S.~Olmschenk, D.~Hayes, L.~Luo, T.~A. Manning, and C.~Monroe.
\newblock {Random numbers certified by Bell's theorem}.
\newblock {\em Nature}, 464: 1021, 2010.
\newblock \\
  \texttt{DOI:\,\href{http://dx.doi.org/10.1038/nature09008}{10.1038/nature09008}}.

\bibitem[PR92]{popescu92a}
S.~Popescu and D.~Rohrlich.
\newblock {Which states violate Bell's inequality maximally?}
\newblock {\em Phys. Lett. A}, 169: 411, 1992.
\newblock \\
  \texttt{DOI:\,\href{http://dx.doi.org/10.1016/0375-9601(92)90819-8}{10.1016/0375-9601(92)90819-8}}.

\bibitem[PR94]{popescu94a}
S.~Popescu and D.~Rohrlich.
\newblock {Quantum nonlocality as an axiom}.
\newblock {\em Found. Phys.}, 24: 379, 1994.
\newblock \\
  \texttt{DOI:\,\href{http://dx.doi.org/10.1007/BF02058098}{10.1007/BF02058098}}.

\bibitem[RMW16]{ribeiro16a}
J.~Ribeiro, G.~Murta, and S.~Wehner.
\newblock {Fully general device-independence for two-party cryptography and
  position verification}.
\newblock 2016.

\bibitem[RMW18]{ribeiro18b}
J.~Ribeiro, G.~Murta, and S.~Wehner.
\newblock {Fully device-independent conference key agreement}.
\newblock {\em Phys. Rev. A}, 97: 022307, 2018.
\newblock \\
  \texttt{DOI:\,\href{http://dx.doi.org/10.1103/PhysRevA.97.022307}{10.1103/PhysRevA.97.022307}}.

\bibitem[RTK{\etalchar{+}}18]{ribeiro18a}
J.~Ribeiro, L.~P. Thinh, J.~Kaniewski, J.~Helsen, and S.~Wehner.
\newblock {Device independence for two-party cryptography and position
  verification with memoryless devices}.
\newblock {\em Phys. Rev. A}, 97: 062307, 2018.
\newblock \\
  \texttt{DOI:\,\href{http://dx.doi.org/10.1103/PhysRevA.97.062307}{10.1103/PhysRevA.97.062307}}.

\bibitem[RUV13]{reichardt13a}
B.~W. Reichardt, F.~Unger, and U.~Vazirani.
\newblock {Classical command of quantum systems}.
\newblock {\em Nature}, 496: 456, 2013.
\newblock \\
  \texttt{DOI:\,\href{http://dx.doi.org/10.1038/nature12035}{10.1038/nature12035}}.

\bibitem[{\v{S}}ASA16]{supic16a}
I.~{\v{S}}upi{\'c}, R.~Augusiak, A.~Salavrakos, and A.~Ac{\'i}n.
\newblock {Self-testing protocols based on the chained Bell inequalities}.
\newblock {\em New J. Phys.}, 18: 035013, 2016.
\newblock \\
  \texttt{DOI:\,\href{http://dx.doi.org/10.1088/1367-2630/18/3/035013}{10.1088/1367-2630/18/3/035013}}.

\bibitem[SAT{\etalchar{+}}17]{salavrakos17a}
A.~Salavrakos, R.~Augusiak, J.~Tura, P.~Wittek, A.~Ac{\'i}n, and S.~Pironio.
\newblock {Bell inequalities tailored to maximally entangled states}.
\newblock {\em Phys. Rev. Lett.}, 119: 040402, 2017.
\newblock \\
  \texttt{DOI:\,\href{http://dx.doi.org/10.1103/PhysRevLett.119.040402}{10.1103/PhysRevLett.119.040402}}.

\bibitem[SCA{\etalchar{+}}11]{silman11a}
J.~Silman, A.~Chailloux, N.~Aharon, I.~Kerenidis, S.~Pironio, and S.~Massar.
\newblock {Fully distrustful quantum bit commitment and coin flipping}.
\newblock {\em Phys. Rev. Lett.}, 106: 220501, 2011.
\newblock \\
  \texttt{DOI:\,\href{http://dx.doi.org/10.1103/PhysRevLett.106.220501}{10.1103/PhysRevLett.106.220501}}.

\bibitem[{\v{S}}CAA18]{supic18a}
I.~{\v{S}}upi{\'c}, A.~Coladangelo, R.~Augusiak, and A.~Ac{\'i}n.
\newblock {Self-testing multipartite entangled states through projections onto
  two systems}.
\newblock {\em New J. Phys.}, 20: 083041, 2018.
\newblock \\
  \texttt{DOI:\,\href{http://dx.doi.org/10.1088/1367-2630/aad89b}{10.1088/1367-2630/aad89b}}.

\bibitem[SLK06]{son06a}
W.~Son, J.~Lee, and M.~S. Kim.
\newblock {Generic Bell inequalities for multipartite arbitrary dimensional
  systems}.
\newblock {\em Phys. Rev. Lett.}, 96: 060406, 2006.
\newblock \\
  \texttt{DOI:\,\href{http://dx.doi.org/10.1103/PhysRevLett.96.060406}{10.1103/PhysRevLett.96.060406}}.

\bibitem[Slo17]{slofstra17a}
W.~Slofstra.
\newblock {The set of quantum correlations is not closed}.
\newblock 2017.

\bibitem[SW87]{summers87a}
S.~J. Summers and R.~F. Werner.
\newblock {Maximal violation of Bell's inequalities is generic in quantum field
  theory}.
\newblock {\em Commun. Math. Phys.}, 110: 247, 1987.
\newblock \\
  \texttt{DOI:\,\href{http://dx.doi.org/10.1007/BF01207366}{10.1007/BF01207366}}.

\bibitem[Tsi87]{tsirelson87a}
B.~S. Tsirelson.
\newblock {Quantum analogues of the Bell inequalities. The case of two
  spatially separated domains}.
\newblock {\em J. Soviet Math.}, 36: 557, 1987.
\newblock \\
  \texttt{DOI:\,\href{http://dx.doi.org/10.1007/BF01663472}{10.1007/BF01663472}}.

\bibitem[Tsi93]{tsirelson93a}
B.~S. Tsirelson.
\newblock {Some results and problems on quantum Bell-type inequalities}.
\newblock {\em Hadronic J. Suppl.}, 8: 329, 1993.

\bibitem[VV12]{vazirani12a}
U.~Vazirani and T.~Vidick.
\newblock {Certifiable quantum dice: or, true random number generation secure
  against quantum adversaries}.
\newblock {\em Proceedings 44th Annual ACM Symposium on Theory of Computing},
  2012.
\newblock \\
  \texttt{DOI:\,\href{http://dx.doi.org/10.1145/2213977.2213984}{10.1145/2213977.2213984}}.

\bibitem[VV14]{vazirani14a}
U.~Vazirani and T.~Vidick.
\newblock {Fully device-independent quantum key distribution}.
\newblock {\em Phys. Rev. Lett.}, 113: 140501, 2014.
\newblock \\
  \texttt{DOI:\,\href{http://dx.doi.org/10.1103/PhysRevLett.113.140501}{10.1103/PhysRevLett.113.140501}}.

\bibitem[WWS16]{wang16a}
Y.~Wang, X.~Wu, and V.~Scarani.
\newblock {All the self-testings of the singlet for two binary measurements}.
\newblock {\em New J. Phys.}, 18: 025021, 2016.
\newblock \\
  \texttt{DOI:\,\href{http://dx.doi.org/10.1088/1367-2630/18/2/025021}{10.1088/1367-2630/18/2/025021}}.

\bibitem[YN13]{yang13a}
T.~H. Yang and M.~Navascu{\'e}s.
\newblock {Robust self-testing of unknown quantum systems into any entangled
  two-qubit states}.
\newblock {\em Phys. Rev. A}, 87: 050102(R), 2013.
\newblock \\
  \texttt{DOI:\,\href{http://dx.doi.org/10.1103/PhysRevA.87.050102}{10.1103/PhysRevA.87.050102}}.

\end{thebibliography}
\newcommand{\etalchar}[1]{$^{#1}$}

\end{document}